\documentclass[conference]{IEEEtran} \IEEEoverridecommandlockouts

\usepackage{amsmath, amsfonts, dsfont}     
\usepackage{amssymb}               
\usepackage{cases}                 
\usepackage{relsize}               
\usepackage{mathbbol}

\allowdisplaybreaks[4]

\makeatletter
\newcommand{\biggg}{\bBigg@{3}}

\newcommand{\Biggg}{\bBigg@{3.5}}

\newcommand{\bigggg}{\bBigg@{4}}

\newcommand{\Bigggg}{\bBigg@{4.5}}

\makeatother

\usepackage{algorithm}            
\usepackage{algorithmic}          

\usepackage{graphicx}             
\graphicspath{{./figs/}}          

\usepackage{adjustbox}            

\usepackage{makecell}            
\usepackage{array}               
\usepackage{threeparttable}      
\usepackage{stfloats}            

\usepackage[caption=false,font=normalsize]{subfig}

\usepackage{xcolor}              
\usepackage{color}               

\usepackage{textcomp}            
\usepackage{url}                 
\usepackage{verbatim}            
\usepackage{cite}                
\usepackage{enumitem}            

\usepackage{lineno}              

\let\oldequation\equation
\let\oldendequation\endequation
\renewenvironment{equation}
   {\linenomathNonumbers\oldequation}
   {\oldendequation\endlinenomath}
\let\oldalign\align
\let\oldendalign\endalign
\renewenvironment{align}
   {\linenomathNonumbers\oldalign}
   {\oldendalign\endlinenomath}
\let\oldnumcases\numcases
\let\oldendnumcases\endnumcases
\renewenvironment{numcases}[1]
   {\begingroup\linenomathNonumbers\oldnumcases{#1}}
   {\oldendnumcases\endlinenomath\endgroup}

\newcommand{\tr}{\mathsf{tr}}
\newcommand{\rank}{\mathsf{r}}

\renewcommand{\det}{\mathsf{det}}

\def\mindex#1{\index{#1}}

\def\sq{\hbox{\rlap{$\sqcap$}$\sqcup$}}
\def\qed{\ifmmode\sq\else{\unskip\nobreak\hfil
\penalty50\hskip1em\null\nobreak\hfil\sq
\parfillskip=0pt\finalhyphendemerits=0\endgraf}\fi\medskip}

\long\def\defbox#1{\framebox[.9\hsize][c]{\parbox{.85\hsize}{%
\parindent=0pt
\baselineskip=12pt plus .1pt      
\parskip=6pt plus 1.5pt minus 1pt 
#1}}}

\long\def\beginbox#1\endbox{\subsection*{}%
\hbox{\hspace{.05\hsize}\defbox{\medskip#1\bigskip}}%
\subsection*{}}
\def\endbox{}

\newsavebox{\junk}
\savebox{\junk}[1.6mm]{\hbox{$|\!|\!|$}}

\def\bfmath#1{{\mathchoice{\mbox{\boldmath$#1$}}%
{\mbox{\boldmath$#1$}}%
{\mbox{\boldmath$\scriptstyle#1$}}%
{\mbox{\boldmath$\scriptscriptstyle#1$}}}}

\def\bfmY{\bfmath{Y}}

\def\bfmhhaY{\bfmath{\hhaY}} 
\def\bfmhhaY{\hbox to 0pt{$\widehat{\bfmY}$\hss}\widehat{\phantom{\raise 1.25pt\hbox{$\bfmY$}}}}

\def\til={{\widetilde =}}

\def\FRAC#1#2#3{\genfrac{}{}{}{#1}{#2}{#3}}

\def\ddtp{{\mathchoice{\FRAC{1}{d^{\hbox to 2pt{\rm\tiny +\hss}}}{dt}}%
{\FRAC{1}{d^{\hbox to 2pt{\rm\tiny +\hss}}}{dt}}%
{\FRAC{3}{d^{\hbox to 2pt{\rm\tiny +\hss}}}{dt}}%
{\FRAC{3}{d^{\hbox to 2pt{\rm\tiny +\hss}}}{dt}}}}

\def\average#1,#2,{{1\over #2} \sum_{#1}^{#2}}

\def\eye(#1){{\bf(#1)}\quad}

\usepackage{amsthm}

\newtheoremstyle{mybold}
      {3pt}
      {3pt}
      {\itshape}
      {}
      {\bfseries}
      {.}
      { }
      {}

\newtheoremstyle{myremark}
      {3pt}
      {3pt}
      {\normalfont}
      {}
      {\bfseries}
      {.}
      { }
      {}

\theoremstyle{mybold}
\newtheorem{theorem}{Theorem}

\newtheorem{proposition}{Proposition}

\theoremstyle{myremark}
\newtheorem{remark}{Remark}

\def\eq#1/{(\ref{eq:#1})}

\newcommand{\beqn}[1]{\notes{#1}%
\begin{eqnarray} \elabel{#1}}

\newcommand{\eeqn}{\end{eqnarray} }

\newcommand{\beq}[1]{\notes{#1}%
\begin{equation}\elabel{#1}}

\newcommand{\eeq}{\end{equation}}

\def\bdes{\begin{description}}
\def\edes{\end{description}}

\newcounter{rmnum}

\newcounter{anum}

{\end{list}}

\def\ass(#1:#2){(#1\ref{#1:#2})}

\def\ritem#1{
\item[{\sf \ass(\current_model:#1)}]
}

\newenvironment{recall-ass}[1]{%
\begin{description}
\def\current_model{#1}}{
\end{description}
}

\newcounter{problem}
\newcounter{save@equation}
\newcounter{save@problem}


\usepackage[most]{tcolorbox}  
\newcounter{ideacounter} 
\renewcommand{\theideacounter}{\arabic{ideacounter}} 

\newcounter{draftcounter} 
\renewcommand{\thedraftcounter}{\arabic{draftcounter}} 

\begin{document}

\title{FDD CSI Feedback under Finite Downlink Training: A Rate-Distortion Perspective}

\author{Shuao Chen, Junyuan Gao, Yuxuan Shi, Yongpeng Wu, Giuseppe Caire, H. Vincent Poor, and Wenjun Zhang 
\thanks{S. Chen, Y. Wu, and W. Zhang are with the Department of Electronic Engineering, Shanghai Jiao Tong
   University, Shanghai 200240, China (e-mails:
   \{shuao.chen, yongpeng.wu, zhangwenjun\}@sjtu.edu.cn)
   (Corresponding author: Yongpeng Wu).
}
\thanks{
   J. Gao is with the Department of Electrical and Electronic Engineering, The Hong Kong Polytechnic University, Hong Kong SAR, China (e-mail: junyuan.gao@polyu.edu.hk).
}
\thanks{
   Y. Shi is with the Department of Networked Intelligence, Pengcheng Laboratory, Shenzhen 410083, China (e-mail: shiyx01@pcl.ac.cn). 
}
\thanks{G. Caire is with the Communications and Information Theory Group, Technische Universit{\"a}t Berlin, Berlin 10587, Germany (e-mail: caire@tu-berlin.de).
}
\thanks{H. V. Poor is with the Department of Electrical and Computer Engineering, Princeton University, Princeton, NJ 08544, USA (e-mail: poor@princeton.edu).
}
}

\maketitle


\begin{abstract}
This paper establishes the theoretical limits of channel state information (CSI) feedback in frequency-division duplexing (FDD) multi-antenna orthogonal frequency-division multiplexing (OFDM) systems under finite-length training with Gaussian pilots. The user employs minimum mean-squared error (MMSE) channel estimation followed by asymptotically optimal uplink feedback. Specifically, we derive a general rate-distortion function (RDF) of the overall CSI feedback system. We then provide both non-asymptotic bounds and asymptotic scaling for the RDF under arbitrary downlink signal-to-noise ratio (SNR) when the number of training symbols exceeds the antenna dimension.
A key observation is that, with sufficient training, the overall RDF converges to the direct RDF corresponding to the case where the user has full access to the downlink CSI. More importantly, we demonstrate that even at a fixed downlink SNR, the convergence rate is inversely proportional to the training length.
The simulation results show that our bounds are tight, and under very limited training, the deviation between the overall RDF and the direct RDF is substantial.
\end{abstract}

\begin{IEEEkeywords}
      CSI feedback, Gaussian pilots, MMSE, rate-distortion theory, Wishart distribution.
\end{IEEEkeywords}

\section{Introduction}

The sixth-generation (6G) wireless networks are envisioned to provide ultra-reliable and high-capacity communications by leveraging large-scale antenna arrays and accurate channel state information (CSI). CSI plays a critical role in beamforming, interference management, and resource allocation~\cite{guo2022overview}. In frequency-division duplexing (FDD) systems, the downlink CSI is not directly available at the base station (BS). Instead, the user equipment (UE) estimates the downlink channel and feeds it back to the BS. As the number of BS antennas increases, the required downlink training overhead increases significantly, which becomes a major bottleneck for FDD massive MIMO systems. 
Existing literature~\cite{wen2018deep,guo2022overview} have explored deep learning techniques to reduce the training and feedback cost. While these methods demonstrate promising performance in simulations, the fundamental theoretical limits of CSI feedback have not been fully established.

The CSI feedback problem can essentially be regarded as a lossy compression problem. \textit{Rate-distortion theory} provides a well-established framework to quantify the tradeoff between compression rate and reconstruction distortion~\cite{thomas2006elements}.
Some recent works have applied rate-distortion theory to obtain the fundamental limits of CSI feedback systems~\cite{khalilsarai2023fdd,kim2025fundamental,song2025downlink}.
In particular, the asymptotic behavior of the overall feedback distortion versus the downlink signal-to-noise ratio (SNR) under finite training was established in~\cite{khalilsarai2023fdd}. 
For practical implementations, several effective schemes were proposed. \cite{khalilsarai2023fdd} showed that entropy-coded scalar quantization (ECSQ) performed well, while \cite{song2025downlink} exploited a proper format of the eigenspace of the received training signals and proposed the truncated Karhunen-Loève (TKL) feedback scheme (see~\cite[Sec.~III-C]{song2025downlink} for details).
However, at the theoretical level, the rate-distortion tradeoff under arbitrary downlink SNR and with finite-length pilots remains an open problem.

In this paper, we derive a general and unified overall rate-distortion function (RDF) for an FDD CSI feedback system with finite-length downlink training, minimum mean-squared error (MMSE) estimation at the UE, and asymptotically optimal uplink feedback.
Specifically, we show that the overall closed-form RDF of the CSI feedback system can be explicitly expressed in terms of the eigenspace representation of the received signals.
Under Gaussian pilots with arbitrary downlink SNR, we provide tight and non-asymptotic bounds on the overall RDF, when the number of training symbols exceeds the number of antennas and pilots are placed on all subcarriers.
Furthermore, we show that as the training length increases, the overall RDF asymptotically converges to the direct RDF, corresponding to the ideal case where the UE has full access to the true downlink CSI for feedback. More importantly, we prove that the gap between the two decays inversely with training length.
It should be noted that the RDF is a theoretical bound achievable only with long-delay joint encoding of many pilot observations. In practice, feedback must be provided immediately after each coherence block, and the optimal one-shot rate-distortion tradeoff remains unknown.

In addition to the overall RDF, we focus on the performance of the MMSE estimator under limited observations at the UE. It is well known that, under the same distortion constraint, compressing the MMSE estimate reduces the RDF compared to compressing the original CSI~\cite{berger1971rate}. We further show that this reduction is related to the difference in differential entropy of the received training signals transmitted over the noisy downlink channel and those over its noiseless counterpart. Non-asymptotic bounds and asymptotic scaling for the MMSE distortion are also provided under finite training, extending the asymptotic results in~\cite{khalilsarai2023fdd,song2025downlink}.

\subsubsection*{Notation}
Bold uppercase letters, e.g., \(\mathbf{H}\), denote random vectors; sans-serif uppercase letters, e.g., \(\mathsf{A}\), denote deterministic matrices; blackboard bold letters, e.g., \(\mathbb{X}\), denote random matrices; and calligraphic letters, e.g., \(\mathcal{S}\), denote sets.
Let \(\tr(\mathsf{A})\), \(\rank(\mathsf{A})\), and \(\lambda_{i}(\mathsf{A})\) denote the trace, rank, and eigenvalues of a matrix, respectively. 
The pseudo-determinant of a Hermitian matrix is given by \(\det^{+}(\mathsf{A}) = \prod_{\lambda_{i}(\mathsf{A})>0} \lambda_{i}(\mathsf{A})\).
We use \(\mathcal{CN}(\cdot, \cdot)\) and \(\mathcal{CW}_{m}(n, \mathsf{\Sigma})\) to denote the circularly symmetric complex Gaussian distribution and the complex Wishart distribution with \(n\) degrees of freedom and covariance matrix \(\mathsf{\Sigma} \in \mathcal{C}^{m \times m}\), respectively.
\(\mathsf{A}_{[:,i]}\) is the \(i\)-th column, and \(\mathsf{A}_{[j,:]}\) is the \(j\)-th row of the matrix \(\mathsf{A}\).
The symbol \(\otimes\) denotes the Kronecker product.

\section{System Model}

\subsection{Downlink Channel Training}

We consider an FDD multi-antenna system, where a BS equipped with \(m_{t}\) transmit antennas serves a single-antenna UE over an orthogonal frequency-division multiplexing (OFDM)-based downlink with \(n_{c}\) subcarriers. The downlink channel is assumed to be quasi-static over at least \(n_{t}\)  consecutive OFDM symbol intervals. For \(n_{p} \leq n_{c}\) pilot-bearing subcarriers, the training symbols received by the UE are given by
\begin{align} \label{eq:DLCSI_training}
\mathbf{Y} = \mathbf{H}^\mathsf{H} \mathbb{P} + \mathbf{W},
\end{align}
where \(\mathbf{W} \sim \mathcal{CN}(\mathbf{0}, \mathsf{I}_{n_{0}})\) is additive white Gaussian noise (AWGN) and \(n_{0} = n_{t} n_{p}\).
The downlink CSI vector \(\mathbf{H}\) represents the channel coefficients stacked across all \(m_t\) antennas and \(n_c\) subcarriers.
The downlink pilot symbol matrix \(\mathbb{P} \in \mathcal{C}^{N \times n_{0}}\) with \(N = m_{t} n_{c}\) is subject to the average transmit power constraint per time-frequency resource element, i.e., \(\mathds{E}[\|\mathbb{P}_{[:,i]}\|^{2}] \leq \mathrm{SNR}_{\mathrm{dl}}\) for all \(i \in \{1, \dots, n_{0}\}\).

To facilitate downlink channel training, in the spirit of \cite{khalilsarai2023fdd,song2025downlink}, let \(\mathcal{N}_{p} \subseteq \{1,\dots,n_{c}\}\) denote the set of \(n_{p}\) subcarriers selected for pilot transmission.  
The downlink pilot matrix \(\mathbb{P}\) is drawn from a Gaussian ensemble and can be formulated as
\begin{align} \label{eq:pilot_matrix}
\mathbb{P} = 
\Big[
\mathbf{e}_{\mathcal{N}_{p}(1)} \otimes \mathbb{P}_{1}, \;
\mathbf{e}_{\mathcal{N}_{p}(2)} \otimes \mathbb{P}_2, \;
\dots, \;
\mathbf{e}_{\mathcal{N}_{p}(n_{p})} \otimes \mathbb{P}_{n_{p}}
\Big],
\end{align}
where \(\mathbf{e}_{\mathcal{N}_{p}(b)} \in \mathcal{R}^{n_{c}}\) is a one-hot vector with 1 at the position \(\mathcal{N}_{p}(b)\) corresponding to the \(b\)-th pilot subcarrier and 0 elsewhere.
Each pilot block \(\mathbb{P}_b \in \mathcal{C}^{m_{t} \times n_{t}}\) consists of columns drawn independently from \(\mathcal{CN}(\mathbf{0}, p \mathsf{I}_{m_{t}})\) with power constraint \(p = \frac{\mathrm{SNR}_{\mathrm{dl}}}{m_{t}}\).

In this work, we consider a downlink channel \(\mathbf{H} \sim \mathcal{CN}(\mathbf{0}, \mathsf{C}_{\mathbf{H}})\) with both spatial and frequency-domain correlation, which remains quasi-static within each block and is independent across blocks~\cite{yang2014quasi,durisi2015short}. Using the eigenvalue decomposition of \(\mathsf{C}_{\mathbf{H}}\), the Hermitian square root of \(\mathsf{C}_{\mathbf{H}}\) is given by
\(\mathsf{C}_{\mathbf{H}}^{1/2}
= \mathsf{U} \mathsf{\Lambda}_{\mathbf{H}}^{1/2} \mathsf{U}^{\mathsf{H}}\). 

\begin{remark}[Availability of channel statistics]
Channel statistics, such as the mean and covariance, generally vary slowly over time compared to the coherence time of the channel. In practical systems, these statistics can be estimated at the BS from uplink measurements via reciprocity, and at the UE from downlink pilots~\cite{dietrich2005pilot, yang2023plug, khalilsarai2023fdd, song2025downlink}. Therefore, in this work, we assume that these statistics are available.
\hfill\ensuremath{\lozenge}
\end{remark}

Given the pilot matrix from Eq.~\eqref{eq:pilot_matrix}, the covariance matrix of the received downlink training signal in Eq.~\eqref{eq:DLCSI_training} at the UE is 
\begin{align}
   \mathsf{C}_{\mathbf{Y} \mid \mathbb{P}} &= \mathds{E}[\mathbf{Y}^{\mathsf{H}}\mathbf{Y} | \mathbb{P}] = \mathbb{A}^{\mathsf{H}}\mathbb{A}+\mathsf{I}_{n_{0}},
\end{align} 
where we define \(\mathbb{A} \triangleq \mathsf{C}_{\mathbf{H}}^{1/2} \mathbb{P}\).
We also introduce a noiseless version of the downlink training, analogous to Eq.~\eqref{eq:DLCSI_training}:  
\(\overline{\mathbf{Y}} = \mathbf{H}^{\mathsf{H}} \mathbb{P}\),  
whose covariance matrix is given by  
\begin{align}
   \mathsf{C}_{\overline{\mathbf{Y}} \mid \mathbb{P}} = \mathbb{A}^{\mathsf{H}}\mathbb{A}.
\end{align}
Recalling that, given \(\mathbb{P}\), both \(\mathbf{Y}\) and \(\overline{\mathbf{Y}}\) follow complex Gaussian distributions, their differential entropies are~\cite{thomas2006elements}
\begin{align}
   h(\mathbf{Y} | \mathbb{P}) &= r_{Y} \log(\pi e) + \log\big(\det^{+}(\mathbb{A}^{\mathsf{H}}\mathbb{A} + \mathsf{I}_{n_{0}})\big), \label{eq:training_signal_entropy} \\
   h(\overline{\mathbf{Y}} | \mathbb{P}) &= r_{\overline{Y}} \log(\pi e) + \log\big(\det^{+}(\mathbb{A}^{\mathsf{H}}\mathbb{A})\big). \label{eq:training_noiseless_signal_entropy}
\end{align}
where \(r_{Y} = \rank(\mathsf{C}_{\mathbf{Y} \mid \mathbb{P}})\) and \(r_{\overline{Y}} = \rank(\mathsf{C}_{\overline{\mathbf{Y}} \mid \mathbb{P}})\).
It can be observed that \(\mathbb{A}\) precisely represents a proper format of the eigenspace of the received signal covariance~\cite[Sec.~III-C]{song2025downlink}.
Moreover, in this work, we show that \(\mathbb{A}\) plays a central role in the theoretical analysis of the CSI feedback overall performance.

\subsection{Channel Estimation and Feedback}

The UE performs channel estimation upon receiving the downlink training transmission of \(n_{t}\) pilot symbols.
Specifically, given the Gaussian pilot \(\mathbb{P}\), the MMSE estimate of \(\mathbf{H}\) based on the observation \(\mathbf{Y}\) is
\begin{align}
\mathbf{S} = \mathds{E}[\mathbf{H} | \mathbf{Y}] 
= \mathsf{C}_{\mathbf{H}} \mathbb{P} \big(\mathbb{P}^{\mathsf{H}} \mathsf{C}_{\mathbf{H}} \mathbb{P} + \mathsf{I}_{n_{0}}\big)^{-1} \mathbf{Y}^{\mathsf{H}}.
\end{align}
The corresponding estimation covariance is
\begin{align}
\mathsf{C}_{\mathbf{S} \mid \mathbb{P}} &= \mathsf{C}_{\mathbf{H}} \mathbb{P} \big(\mathbb{P}^{\mathsf{H}} \mathsf{C}_{\mathbf{H}} \mathbb{P} + \mathsf{I}_{n_{0}}\big)^{-1} \mathbb{P}^{\mathsf{H}} \mathsf{C}_{\mathbf{H}} \\
&= \mathsf{C}_{\mathbf{H}}^{1/2} \mathbb{A}\left(\mathbb{A}^{\mathsf{H}}\mathbb{A}+\mathsf{I}\right)^{-1}\mathbb{A}^{\mathsf{H}} \mathsf{C}_{\mathbf{H}}^{1/2}. \label{eq:MMSE_estimator}
\end{align}
The MMSE distortion resulting from the UE's estimate is
\begin{align}
D_{\mathrm{mmse}|\mathbb{P}} &= \tr(\mathsf{C}_{\mathbf{H}} - \mathsf{C}_{\mathbf{S} \mid \mathbb{P}}).
\end{align}

\begin{figure}[!t]
   \centering
   \includegraphics[width=1\linewidth]{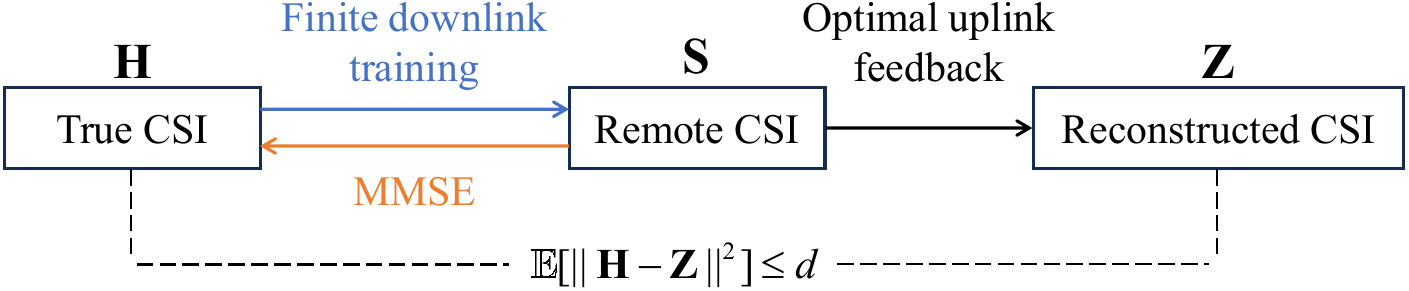}
   \caption{System model for CSI feedback.}
   \label{fig:model}
   \vspace{-0.5em}
\end{figure}

The overall CSI feedback system is represented as the Markov chain \(\mathbf{H} \to \mathbf{S} \to \mathbf{Z}\) in Fig.~\ref{fig:model}. Given the pilot matrix \(\mathbb{P}\), the MMSE estimate \(\mathbf{S}\) serves as an intermediate representation of the true channel \(\mathbf{H}\). 
The UE then quantizes and encodes \(\mathbf{S}\) for uplink feedback, leading to a reconstruction \(\mathbf{Z}\) at the BS. 
The overall distortion caused by the CSI acquisition process is measured by 
\(\mathsf{d}(\mathbf{H}, \mathbf{Z}) = \mathds{E}\left[\|\mathbf{H} - \mathbf{Z}\|^{2}\right]\). 
Due to the presence of the intermediate observation \(\mathbf{S}\), it is worthwhile to recall that the framework of \textit{remote source coding}, introduced in~\cite{khalilsarai2023fdd,song2025downlink}, is used to characterize the information-theoretic limit of CSI acquisition via downlink training and uplink feedback.
The indirect RDF of \(\mathbf{H}\) observed through \(\mathbf{S}\) is defined as~\cite{kostina2016nonasymptotic}
\begin{align} \label{eq:RDF_original_form}
R_{\mathbf{H},\mathbf{S}|\mathbb{P}}(d) 
\triangleq 
\inf_{
\substack{
P_{\mathbf{Z}|\mathbf{S}}:
\mathds{E}[\mathsf{d}(\mathbf{H}, \mathbf{Z})] \leq d
}
}
I(\mathbf{Z}; \mathbf{S}|\mathbb{P}).
\end{align}
In this paper, we explicitly provide the RDF of the entire CSI feedback process. The relationship between the RDF and the number of downlink training transmissions \(n_{t}\) is then analyzed.

\section{Main Results}

\subsection{Derivation of RDF}

In the overall feedback system, the true source is the downlink CSI \(\mathbf{H}\), and the corresponding distortion constraint imposed on it is \(d\). However, the true source \(\mathbf{H}\) is not fully accessible at the UE. The UE observes only the MMSE estimate \(\mathbf{S}\) given the pilot matrix \(\mathbb{P}\), and the effective allowable distortion for compressing \(\mathbf{S}\) becomes \(d - D_{\mathrm{mmse}|\mathbb{P}}\). This inevitably causes the RDF to depend on the pilot matrix \(\mathbb{P}\). To quantify this dependence, we decompose the RDF into two parts. The first part is independent of \(\mathbb{P}\). The second part captures the random deviations in the RDF arising from the limited downlink training with pilots drawn from a Gaussian ensemble. In the second part, we explicitly distinguish between the deviation caused by compressing \(\mathbf{S}\) given \(\mathbb{P}\) instead of \(\mathbf{H}\), and the deviation caused by using an effective distortion of \(d - D_{\mathrm{mmse}|\mathbb{P}}\) instead of \(d\). 
These observations are summarized in the following theorem.

\begin{theorem}
\label{thm:RDF_main}
Consider a downlink channel vector 
\(\mathbf{H} \sim \mathcal{CN}(\mathbf{0}, \mathsf{C}_{\mathbf{H}})\)
with arbitrary spatial correlation among \(m_{t}\) antennas and frequency correlation across \(n_{c}\) subcarriers.  
Using the Gaussian ensemble pilot matrix \(\mathbb{P}\) over \(n_{t}\) training transmissions on subcarriers \(\mathcal{N}_{p} \subseteq \{1,\dots,n_{c}\}\),  
the UE obtains the MMSE estimate \(\mathbf{S}\), which is encoded and fed back for reconstruction \(\mathbf{Z}\).  
Under the distortion constraint 
\(\mathds{E}[\|\mathbf{H}-\mathbf{Z}\|^{2}] \leq d\),
the corresponding RDF is
\begin{align} \label{eq:RDF_v1}
R_{\mathbf{H},\mathbf{S}|\mathbb{P}}(d) = R_{\mathbf{H}}(d) + \Delta R_{\mathbb{P},\mathbf{S}} + \Delta R_{\mathbb{P},d},
\end{align} 
where
\begin{align}
R_{\mathbf{H}}(d) &= \frac{1}{2}\log\big(\det^{+}(\mathsf{C}_{\mathbf{H}})\big) - \frac{r_{H}}{2}\log\frac{d}{r_{H}}, \label{eq:RDF_v1_direct} \\
\Delta R_{\mathbb{P},\mathbf{S}} &= \frac{1}{2} \big(h(\mathbf{S} | \mathbb{P}) - h(\mathbf{H})\big) + \frac{r_{H}}{2}\log\frac{r_{S}}{r_{H}}, \label{eq:RDF_v1_source} \\
\Delta R_{\mathbb{P},d} &= \frac{r_{H}-r_{S}}{2}\log\frac{d \pi e}{r_{S}} - \frac{r_{S}}{2}\log \Big(1 - \frac{D_{\mathrm{mmse}|\mathbb{P}}}{d} \Big), \label{eq:RDF_v1_distortion}
\end{align} 
and
\begin{align} 
   h(\mathbf{H}) &= r_{H} \log(\pi e) + \log\big(\det^{+}(\mathsf{C}_{\mathbf{H}})\big), \label{eq:channel_entropy} \\
   h(\mathbf{S} | \mathbb{P}) &= r_{S} \log(\pi e) + \log\big(\det^{+}\left(\mathsf{C}_{\mathbf{H}}(\mathsf{I}+\mathbb{A}\mathbb{A}^{\mathsf{H}})^{-1}\mathbb{A}\mathbb{A}^{\mathsf{H}}\right)\big), \label{eq:channel_mmse_entropy} \\
   D_{\mathrm{mmse}|\mathbb{P}} &= \tr\big(\mathsf{C}_{\mathbf{H}} (\mathsf{I} + \mathbb{A} \mathbb{A}^{\mathsf{H}})^{-1}\big). \label{eq:Dmmse_trace_form}
\end{align} 
Here we let \(\mathbb{A} = \mathsf{C}_{\mathbf{H}}^{1/2}\mathbb{P}\), \(r_{H} = \rank(\mathsf{C}_{\mathbf{H}})\), \(r_{S} = \rank(\mathsf{C}_{\mathbf{S} \mid \mathbb{P}})\) and assume that the distortion constraint \(d\) satisfies
\begin{align} \label{eq:distortion_range}
   \tr(\mathsf{C}_{\mathbf{H}}) - \tr(\mathsf{C}_{\mathbf{S} \mid \mathbb{P}}) &\leq d \leq \tr(\mathsf{C}_{\mathbf{H}}) - \tr(\mathsf{C}_{\mathbf{S} \mid \mathbb{P}}) + r_{S}\lambda_{\min}(\mathsf{C}_{\mathbf{S} \mid \mathbb{P}}), 
\end{align} 
where \(\mathsf{C}_{\mathbf{S} \mid \mathbb{P}} = \mathsf{C}_{\mathbf{H}} \mathbb{P} \big(\mathbb{P}^{\mathsf{H}} \mathsf{C}_{\mathbf{H}} \mathbb{P} + \mathsf{I}_{n_{0}}\big)^{-1} \mathbb{P}^{\mathsf{H}} \mathsf{C}_{\mathbf{H}}\).
\end{theorem}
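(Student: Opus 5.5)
Conditioned on \(\mathbb{P}\), the remote source problem reduces to a direct problem: since \(\mathbf{S}=\mathds{E}[\mathbf{H}|\mathbf{Y}]\) and \(\mathbf{H}\to\mathbf{S}\to\mathbf{Z}\), the orthogonality principle gives \(\mathds{E}\|\mathbf{H}-\mathbf{Z}\|^2 = D_{\mathrm{mmse}|\mathbb{P}} + \mathds{E}\|\mathbf{S}-\mathbf{Z}\|^2\). Hence \(R_{\mathbf{H},\mathbf{S}|\mathbb{P}}(d)=R_{\mathbf{S}|\mathbb{P}}(d-D_{\mathrm{mmse}|\mathbb{P}})\), the ordinary RDF of the Gaussian vector \(\mathbf{S}\sim\mathcal{CN}(\mathbf{0},\mathsf{C}_{\mathbf{S}|\mathbb{P}})\) under MSE. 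This is the standard Wolf--Ziv/Berger argument~\cite{berger1971rate}, which I would cite and re-derive in a few lines.

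Next I will apply reverse water-filling to the \(r_S\) nonzero eigenvalues of \(\mathsf{C}_{\mathbf{S}|\mathbb{P}}\). The upper end of the range \eqref{eq:distortion_range} is exactly the condition \(d-D_{\mathrm{mmse}|\mathbb{P}}\le r_S\lambda_{\min}\), with \(\lambda_{\min}\) the smallest \emph{nonzero} eigenvalue. Under this condition the water level is \(\theta=(d-D_{\mathrm{mmse}|\mathbb{P}})/r_S\) and no component is switched off. Using the complex-Gaussian normalization implicit in \eqref{eq:RDF_v1_direct}, this gives
\begin{align}
R_{\mathbf{H},\mathbf{S}|\mathbb{P}}(d)=\tfrac12\log\det^{+}(\mathsf{C}_{\mathbf{S}|\mathbb{P}})-\tfrac{r_S}{2}\log\tfrac{d-D_{\mathrm{mmse}|\mathbb{P}}}{r_S}.
\end{align}
The lower end \(d\ge D_{\mathrm{mmse}|\mathbb{P}}\) is simply feasibility.

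The remaining work is algebraic. First I will verify \eqref{eq:Dmmse_trace_form} and \eqref{eq:channel_mmse_entropy} from \eqref{eq:MMSE_estimator} via the push-through identity \(\mathbb{A}(\mathbb{A}^{\mathsf H}\mathbb{A}+\mathsf{I})^{-1}\mathbb{A}^{\mathsf H}=(\mathsf{I}+\mathbb{A}\mathbb{A}^{\mathsf H})^{-1}\mathbb{A}\mathbb{A}^{\mathsf H}=\mathsf{I}-(\mathsf{I}+\mathbb{A}\mathbb{A}^{\mathsf H})^{-1}\). This gives \(\tr(\mathsf{C}_{\mathbf{H}}-\mathsf{C}_{\mathbf{S}|\mathbb{P}})=\tr(\mathsf{C}_{\mathbf{H}}(\mathsf{I}+\mathbb{A}\mathbb{A}^{\mathsf H})^{-1})\). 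For the entropy, the nonzero eigenvalues of \(\mathsf{C}_{\mathbf{H}}^{1/2}\mathsf{M}\mathsf{C}_{\mathbf{H}}^{1/2}\) coincide with those of \(\mathsf{C}_{\mathbf{H}}\mathsf{M}\), so the pseudo-determinants agree. The entropy formulas \eqref{eq:channel_entropy}--\eqref{eq:channel_mmse_entropy} are the Gaussian entropies on the supports.

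The rest is bookkeeping. Write \(\tfrac12\log\det^{+}\mathsf{C}_{\mathbf{S}|\mathbb{P}}=\tfrac12(h(\mathbf{S}|\mathbb{P})-r_S\log\pi e)\) and \(\tfrac12\log\det^+\mathsf{C}_{\mathbf{H}}=\tfrac12(h(\mathbf{H})-r_H\log \pi e)\). Also split \(\log\frac{d-D}{r_S}=\log\frac{d}{r_S}+\log(1-D/d)\). Subtracting \(R_{\mathbf{H}}(d)\) from the water-filling expression leaves a residual consisting only of \(\log(\pi e)\), \(\log d\), \(\log r_S\) and \(\log r_H\) terms. I would collect these to match \(\frac{r_H}{2}\log\frac{r_S}{r_H}+\frac{r_H-r_S}{2}\log\frac{d\pi e}{r_S}\), a finite identity check.

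The main obstacle is this constant-term reconciliation, since the stated split must absorb the \(\pi e\) factors consistently. A first pass gives a residual \(\tfrac{r_H-r_S}{2}\log\pi e + \tfrac{r_H}{2}\log\tfrac{d}{r_H}-\tfrac{r_S}{2}\log\tfrac{d}{r_S}\). This equals the stated sum because \(\frac{r_H}{2}\log\frac{r_S}{r_H}+\frac{r_H-r_S}{2}\log\frac{d}{r_S}=\frac{r_H}{2}\log\frac{d}{r_H}-\frac{r_S}{2}\log\frac{d}{r_S}\). I would still carefully confirm the complex-versus-real \(\tfrac12\) normalization used throughout.
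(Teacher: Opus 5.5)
Your proposal is correct and follows essentially the same route as the paper. The paper also reduces the problem to compressing $\mathbf{S}$ at effective distortion $d-D_{\mathrm{mmse}|\mathbb{P}}$: it uses the independence of $\mathbf{E}=\mathbf{H}-\mathbf{S}$ from $\mathbf{S}$ and a backward Gaussian test channel $\mathbf{S}=\mathbf{Z}+\mathbf{N}$, where you use the orthogonality principle. It then applies reverse water-filling with the common level $(d-D_{\mathrm{mmse}|\mathbb{P}})/r_S$ and invokes the same push-through/Woodbury identity, and your constant-term check is exactly the bookkeeping the paper leaves implicit.

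On your normalization worry: the paper's proof uses the same factor in $\tfrac12\log\det^{+}(\mathsf{C}_{\mathbf{S}\mid\mathbb{P}}\mathsf{C}_{\mathbf{N}}^{-1})$. This $\tfrac12$ is the real-valued convention, not a complex one, even though the entropies are written in complex form. Because the $\log(\pi e)$ terms cancel identically, your derivation reproduces the stated decomposition under that convention.
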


\begin{IEEEproof}
See Appendicx~\ref{proof:thm:RDF_main}.
\end{IEEEproof}

\begin{remark}[Range of the imposed distortion constraint]
The range of \(d\) in Eq.~\eqref{eq:distortion_range} specifies the minimal distortion where the closed-form RDF applies. More generally, if \(d\) satisfies
\begin{align}
r'_{S} \, \lambda_{i}(\mathsf{C}_{\mathbf{S} \mid \mathbb{P}}) 
\leq d - \big(\tr(\mathsf{C}_{\mathbf{H}}) - \tr(\mathsf{C}_{\mathbf{S} \mid \mathbb{P}})\big) 
\leq r'_{S} \, \lambda_{i+1}(\mathsf{C}_{\mathbf{S} \mid \mathbb{P}}),
\end{align}
where \(\lambda_{i}(\mathsf{C}_{\mathbf{S} \mid \mathbb{P}})\) denotes the \(i\)-th smallest nonzero eigenvalue of \(\mathsf{C}_{\mathbf{S} \mid \mathbb{P}}\) and \(r'_{S} = r_{S} - i\). Thus, Theorem~\ref{thm:RDF_main} still holds if \(r_{S}\) in Eq.~\eqref{eq:RDF_v1} is replaced by \(r'_{S}\).
In practice, since the distortion introduced by CSI feedback is typically small~\cite{khalilsarai2023fdd,song2025downlink}, it suffices to consider the most stringent distortion range.
\hfill\ensuremath{\lozenge}
\end{remark}

In this theorem, we derive the overall RDF for the CSI feedback system, where the UE estimates the downlink channel via MMSE after \(n_{t}\) training transmissions with Gaussian pilots and subsequently encodes the estimated CSI for feedback. Specifically, the overall RDF \(R_{\mathbf{H},\mathbf{S}|\mathbb{P}}(d)\) in Eq.~\eqref{eq:RDF_v1} is decomposed into three components. The first term, \(R_{\mathbf{H}}(d)\) in Eq.~\eqref{eq:RDF_v1_direct}, can be interpreted as the RDF for directly encoding the true source \(\mathbf{H}\) under distortion \(d\). The second term, \(\Delta R_{\mathbb{P},\mathbf{S}}\) in Eq.~\eqref{eq:RDF_v1_source}, quantifies the decrease in compression cost when encoding the MMSE estimate \(\mathbf{S}\) given \(\mathbb{P}\), instead of the true source \(\mathbf{H}\), as the MMSE estimation reduces the source uncertainty. The third term, \(\Delta R_{\mathbb{P},d}\) in Eq.~\eqref{eq:RDF_v1_distortion}, quantifies the increase in compression cost resulting from the smaller effective distortion allowance \(d - D_{\mathrm{mmse}|\mathbb{P}}\).
It is worth emphasizing that both \(\Delta R_{\mathbb{P},\mathbf{S}}\) and \(\Delta R_{\mathbb{P},d}\) can cause the remote RDF \(R_{\mathbf{H},\mathbf{S}|\mathbb{P}}(d)\) to deviate from the direct RDF \(R_{\mathbf{H}}(d)\), primarily due to the strong randomness introduced by the Gaussian pilot matrix under a limited number of training transmissions.

In the following, to better quantify the impact of pilot-induced randomness on overall feedback performance, we examine how the training length \(n_{t}\) influences the expected feedback efficiency (the established RDFs) under common and mild conditions.

\subsection{RDF Decrease from Compressing the MMSE Estimate}

We assume that \(\mathsf{C}_{\mathbf{H}}\) is full rank. For analytical tractability, we consider the case where pilots occupy all subcarriers (\(n_{c} = n_{p}\)) and focus on the regime \(n_{t} \geq m_{t}\) to ensure that the estimation in Eq.~\eqref{eq:DLCSI_training} is well-defined.
This setting is practically relevant, as it provides sufficient observations for accurate channel estimation and supports reliable overall feedback performance.
It is well known that, under the same distortion constraint, compressing the MMSE estimate 
\(\mathbf{S} = \mathds{E}[\mathbf{H} | \mathbf{Y}]\) incurs a lower rate than compressing \(\mathbf{H}\) directly. 
This is because \(\mathbf{S}\) contains only the components of \(\mathbf{H}\) that are correlated with \(\mathbf{Y}\), while the remaining uncorrelated part arises from the AWGN \(\mathbf{W}\) in \(\mathbf{Y}\). This observation can be formally stated as follows.

\begin{proposition} \label{prop:RDF_decrease}
When \(n_{c} = n_{p}\) and \(n_{t} \geq m_{t}\), under the same distortion constraint and a given pilot matrix \(\mathbb{P}\) drawn from a Gaussian ensemble, the RDF reduction from compressing the MMSE estimate \(\mathbf{S} = \mathds{E}[\mathbf{H} | \mathbf{Y}]\) of \(\mathbf{H}\), rather than \(\mathbf{H}\), is
\begin{align} \label{eq:RDF_v1_source_entropy}
\Delta R_{\mathbb{P},\mathbf{S}} &= \frac{1}{2} \big(h(\mathbf{S} | \mathbb{P}) - h(\mathbf{H})\big) \\
&= \frac{1}{2} \big(h(\overline{\mathbf{Y}} | \mathbb{P}) - h(\mathbf{Y} | \mathbb{P})\big),
\end{align}
where the four differential entropies are given in Eqs.~\eqref{eq:training_signal_entropy}, \eqref{eq:training_noiseless_signal_entropy}, \eqref{eq:channel_entropy}, \eqref{eq:channel_mmse_entropy}, respectively.
\end{proposition}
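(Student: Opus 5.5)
The plan is to show that under the stated hypotheses all the rank quantities in Theorem~\ref{thm:RDF_main} collapse to $N=m_t n_c$ almost surely. The correction term $\frac{r_H}{2}\log\frac{r_S}{r_H}$ in Eq.~\eqref{eq:RDF_v1_source} then vanishes, which gives the first equality. The second equality reduces to a determinant identity between $\mathbb{A}\mathbb{A}^{\mathsf{H}}$ and $\mathbb{A}^{\mathsf{H}}\mathbb{A}$.

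\textbf{Step 1 (ranks).} With $n_p=n_c$, the pilot matrix in Eq.~\eqref{eq:pilot_matrix} is, up to a column permutation, block diagonal. Its blocks are the $m_t\times n_t$ matrices $\mathbb{P}_b$. Each block has i.i.d.\ $\mathcal{CN}(\mathbf{0},p\mathsf{I}_{m_t})$ columns with $n_t\ge m_t$, so it has full row rank $m_t$ with probability one. Hence $\rank(\mathbb{P})=N$ almost surely. Since $\mathsf{C}_{\mathbf{H}}$ is full rank, $\mathbb{A}=\mathsf{C}_{\mathbf{H}}^{1/2}\mathbb{P}$ has rank $N$, so the $N\times N$ matrix $\mathbb{A}\mathbb{A}^{\mathsf{H}}$ is invertible. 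Using the push-through identity $\mathbb{A}(\mathbb{A}^{\mathsf{H}}\mathbb{A}+\mathsf{I})^{-1}\mathbb{A}^{\mathsf{H}}=(\mathsf{I}+\mathbb{A}\mathbb{A}^{\mathsf{H}})^{-1}\mathbb{A}\mathbb{A}^{\mathsf{H}}$ in Eq.~\eqref{eq:MMSE_estimator}, we get that $\mathsf{C}_{\mathbf{S}\mid\mathbb{P}}$ is a product of nonsingular matrices. Therefore $r_S=r_H=N$, and $\frac{r_H}{2}\log\frac{r_S}{r_H}=0$. This proves the first line of the proposition.

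\textbf{Step 2 (entropy difference of $\mathbf{S}$ and $\mathbf{H}$).} Since $r_S=r_H$, the $\log(\pi e)$ terms in Eqs.~\eqref{eq:channel_entropy} and \eqref{eq:channel_mmse_entropy} cancel. All determinants are now ordinary determinants, and multiplicativity gives
\begin{align*}
h(\mathbf{S}|\mathbb{P})-h(\mathbf{H})=\log\det(\mathbb{A}\mathbb{A}^{\mathsf{H}})-\log\det(\mathsf{I}_N+\mathbb{A}\mathbb{A}^{\mathsf{H}}).
\end{align*}

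\textbf{Step 3 (matching with $\mathbf{Y}$ and $\overline{\mathbf{Y}}$).} We use the identity $\det(\mathsf{I}_{n_0}+\mathbb{A}^{\mathsf{H}}\mathbb{A})=\det(\mathsf{I}_N+\mathbb{A}\mathbb{A}^{\mathsf{H}})$. We also use the fact that $\mathbb{A}^{\mathsf{H}}\mathbb{A}$ and $\mathbb{A}\mathbb{A}^{\mathsf{H}}$ share their nonzero eigenvalues, so $\det^{+}(\mathbb{A}^{\mathsf{H}}\mathbb{A})=\det(\mathbb{A}\mathbb{A}^{\mathsf{H}})$. Substituting into Eqs.~\eqref{eq:training_signal_entropy} and \eqref{eq:training_noiseless_signal_entropy} matches the log-determinant parts exactly.

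\textbf{Main obstacle.} The remaining issue is the $\log(\pi e)$ bookkeeping, and this is the step I expect to be delicate. Here $r_{\overline{Y}}=\rank(\mathbb{A}^{\mathsf{H}}\mathbb{A})=N$, while $r_Y=n_0=n_tn_c$ because $\mathsf{C}_{\mathbf{Y}\mid\mathbb{P}}$ is positive definite. A literal computation therefore leaves a residual term $(N-n_0)\log(\pi e)=-n_c(n_t-m_t)\log(\pi e)$. This term vanishes when $n_t=m_t$ but not otherwise.

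I would resolve this by restricting $\mathbf{Y}$ to the $N$-dimensional row space of $\mathbb{A}$. Write $\mathbf{Y}=\mathbf{Y}_{\parallel}+\mathbf{Y}_{\perp}$. The orthogonal component $\mathbf{Y}_{\perp}$ is pure noise, independent of $\mathbf{H}$ and of $\mathbf{Y}_{\parallel}$, and it is irrelevant to the MMSE estimate. The entropy $h(\mathbf{Y}_{\parallel}|\mathbb{P})$ has exactly $N\log(\pi e)+\log\det(\mathsf{I}+\mathbb{A}\mathbb{A}^{\mathsf{H}})$. With this reading, equivalently interpreting $h(\mathbf{Y}|\mathbb{P})$ in the proposition as the entropy of the sufficient statistic, the identity holds exactly. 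I would state this interpretation explicitly in the proof rather than leave the $\log(\pi e)$ mismatch implicit.
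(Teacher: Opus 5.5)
Your Steps 1--3 are the paper's argument: $\mathbb{P}$ has full rank almost surely, so $r_S=r_H=N$; this gives the ratio $\frac{1}{2}\log\frac{\det(\mathbb{A}\mathbb{A}^{\mathsf{H}})}{\det(\mathsf{I}+\mathbb{A}\mathbb{A}^{\mathsf{H}})}$, which is then matched via the shared nonzero eigenvalues of $\mathbb{A}\mathbb{A}^{\mathsf{H}}$ and $\mathbb{A}^{\mathsf{H}}\mathbb{A}$. The $\log(\pi e)$ mismatch you flag is genuine, and the paper's proof does not address it: with Eq.~\eqref{eq:training_signal_entropy} taken literally, $r_Y=n_0$, so the second equality is off by $\frac{n_c(n_t-m_t)}{2}\log(\pi e)$ whenever $n_t>m_t$. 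Your restriction to the sufficient statistic $\mathbf{Y}_{\parallel}$ is the right repair; equivalently, it removes the $(n_0-N)\log(\pi e)$ entropy of the pure-noise component $\mathbf{Y}_{\perp}$ and makes the statement exact.
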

\begin{IEEEproof}
   Given that \(\mathbb{P}\) is generated according to Eq.~\eqref{eq:pilot_matrix} with \(n_{0} \geq N = m_{t} n_{c}\), it is almost surely of full rank \(N\). Thus, \(\mathsf{C}_{\mathbf{S} \mid \mathbb{P}}\) has rank \(r_{S} = N\) almost surely, and \(\mathbb{A}\mathbb{A}^{\mathsf{H}}\) is also full rank, where \(\mathbb{A} = \mathsf{C}_{\mathbf{H}}^{1/2} \mathbb{P}\). In this case, \(\Delta R_{\mathbb{P},\mathbf{S}}\) in Eq.~\eqref{eq:RDF_v1_source} can be elegantly expressed by referring to Eqs.~\eqref{eq:channel_entropy} and \eqref{eq:channel_mmse_entropy} as
\begin{align}
\Delta R_{\mathbb{P},\mathbf{S}} = \frac{1}{2}\log \frac{\det\left(\mathbb{A}\mathbb{A}^{\mathsf{H}}\right)}{\det\left(\mathsf{I}+\mathbb{A}\mathbb{A}^{\mathsf{H}}\right)}.
\end{align}
Recalling Eqs.~\eqref{eq:training_signal_entropy} and  \eqref{eq:training_noiseless_signal_entropy}, and noting that \(\mathbb{A}\mathbb{A}^{\mathsf{H}}\) and \(\mathbb{A}^{\mathsf{H}}\mathbb{A}\) share the same nonzero eigenvalues, the result follows.
\end{IEEEproof}

It can be observed that the MMSE estimator provides a theoretically efficient approach for CSI feedback. Under the same distortion constraint, the MMSE-based compression achieves a reduction in the RDF, which quantifies the fraction of the received information rendered irrelevant to the true CSI due to the presence of downlink noise.
Next, we present the relationship between \(\Delta R_{\mathbb{P},\mathbf{S}}\) and the downlink training length \(n_{t}\) in expectation.

\begin{theorem} \label{thm:RDF_decrease}
When \(n_{c} = n_{p}\) and \(n_{t} \geq m_{t}\), under the same distortion constraint, the expected RDF reduction \(\mathds{E}_{\mathbb{P}}[\Delta R_{\mathbb{P},\mathbf{S}}]\) from compressing the MMSE estimate \(\mathbf{S}\), with the pilot matrix \(\mathbb{P}\) drawn from a Gaussian ensemble in Eq.~\eqref{eq:pilot_matrix}, is bounded as
\begin{align}
   \Delta R_{\mathbf{S}}^{L} \leq \mathds{E}_{\mathbb{P}}[\Delta R_{\mathbb{P},\mathbf{S}}] \leq \Delta R_{\mathbf{S}}^{U},
\end{align}
where \(\Delta R_{\mathbb{P},\mathbf{S}}\) is in Eq.~\eqref{eq:RDF_v1_source}, and
\begin{align}
   &\Delta R_{\mathbf{S}}^{L} = \frac{1}{2} \Big(\log \frac{\det(\mathsf{C}_{\mathbf{H}})}{\det\big(\mathsf{I}+\mathrm{SNR}_{\mathrm{dl}} \frac{n_{t}}{m_{t}}\,\mathsf{C}_{\mathbf{H}}\big)} + m_{t} n_{c} \log \frac{\mathrm{SNR}_{\mathrm{dl}}}{m_{t}} \notag \\
   &\quad \quad \quad \; + n_{c} \sum_{j=0}^{m_{t}-1} \psi(n_{t} - j)\Big), \label{eq:RDF_source_LB} \\
   &\Delta R_{\mathbf{S}}^{U} = \frac{m_{t}}{2\mathrm{SNR}_{\mathrm{dl}} (n_{t} - m_{t})} \cdot \Big(-\tr\big(\mathsf{C}_{\mathbf{H}}^{-1}\big) \notag \\
   & \quad + \frac{m_{t}^{2}n_{c}n_{t}}{2\mathrm{SNR}_{\mathrm{dl}}(n_{t}-m_{t}-1)(n_{t}-m_{t}+1)\lambda_{\min}^{2}(\mathsf{C}_{\mathbf{H}})}\Big). \label{eq:RDF_source_UB}
\end{align}
Furthermore, in the regime of large \(n_{t}\), we have the asymptotic scaling
\begin{align} \label{eq:RDF_source_asymptotics}
   \mathds{E}_{\mathbb{P}}[\Delta R_{\mathbb{P},\mathbf{S}}] = \frac{C_{\mathbf{S}}}{n_{t}} + O\!\left(\frac{1}{n_{t}^{2}}\right),
\end{align}
where \(C_{\mathbf{S}}^{L} \leq C_{\mathbf{S}} \leq C_{\mathbf{S}}^{U}\), and 
\begin{align}
   C_{\mathbf{S}}^{L} &= - \frac{m_{t}\tr(\mathsf{C}_{\mathbf{H}}^{-1})}{2\mathrm{SNR}_{\mathrm{dl}}} - \frac{m_{t}^{2}n_{c}}{4}, \label{eq:RDF_source_LB_coefficient} \\
   C_{\mathbf{S}}^{U} &= - \frac{m_{t}\tr(\mathsf{C}_{\mathbf{H}}^{-1})}{2\mathrm{SNR}_{\mathrm{dl}}}. \label{eq:RDF_source_UB_coefficient}
\end{align}
\end{theorem}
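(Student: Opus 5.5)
We would start from the closed form obtained in the proof of Proposition~\ref{prop:RDF_decrease}. Writing \(\mathbb{X} \triangleq \mathbb{A}\mathbb{A}^{\mathsf{H}} = \mathsf{C}_{\mathbf{H}}^{1/2}\mathbb{P}\mathbb{P}^{\mathsf{H}}\mathsf{C}_{\mathbf{H}}^{1/2}\), it reads
\[
\Delta R_{\mathbb{P},\mathbf{S}} = \tfrac{1}{2}\big(\log\det\mathbb{X} - \log\det(\mathsf{I}+\mathbb{X})\big) = -\tfrac{1}{2}\log\det(\mathsf{I}+\mathbb{X}^{-1}).
\]
The key structural fact is that \(n_c = n_p\). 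By Eq.~\eqref{eq:pilot_matrix}, \(\mathbb{P}\mathbb{P}^{\mathsf{H}}\) is then block diagonal with \(n_c\) i.i.d.\ blocks \(\mathbb{W}_b = \mathbb{P}_b\mathbb{P}_b^{\mathsf{H}} \sim \mathcal{CW}_{m_t}(n_t, p\mathsf{I}_{m_t})\), where \(p = \mathrm{SNR}_{\mathrm{dl}}/m_t\). Every expectation below therefore reduces to standard complex Wishart and inverse Wishart moments.

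\emph{Lower bound.} We split \(\mathds{E}[\log\det\mathbb{X}] = \log\det\mathsf{C}_{\mathbf{H}} + \sum_b \mathds{E}[\log\det\mathbb{W}_b]\). We then use the known complex Wishart identity
\[
\mathds{E}[\log\det\mathbb{W}_b] = m_t\log p + \sum_{j=0}^{m_t-1}\psi(n_t-j).
\]
For the second term, concavity of \(\log\det\) and Jensen's inequality give \(\mathds{E}[\log\det(\mathsf{I}+\mathbb{X})] \le \log\det(\mathsf{I}+\mathds{E}\mathbb{X})\). Since \(\mathds{E}\mathbb{X} = p\,n_t\,\mathsf{C}_{\mathbf{H}} = \mathrm{SNR}_{\mathrm{dl}}\frac{n_t}{m_t}\mathsf{C}_{\mathbf{H}}\), combining the two pieces yields exactly \(\Delta R_{\mathbf{S}}^{L}\) in Eq.~\eqref{eq:RDF_source_LB}.

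\emph{Upper bound.} We apply the scalar inequality \(\log(1+x)\ge x - x^2/2\) for \(x\ge 0\) to the eigenvalues of \(\mathbb{X}^{-1}\), which gives
\[
\Delta R_{\mathbb{P},\mathbf{S}} \le -\tfrac12\tr(\mathbb{X}^{-1}) + \tfrac14\tr(\mathbb{X}^{-2}).
\]
\begin{itemize}
\item First term: \(\tr(\mathbb{X}^{-1}) = \tr\big(\mathsf{C}_{\mathbf{H}}^{-1}(\mathbb{P}\mathbb{P}^{\mathsf{H}})^{-1}\big)\), and the complex inverse Wishart mean is \(\mathds{E}[\mathbb{W}_b^{-1}] = \mathsf{I}/(p(n_t-m_t))\). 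This gives \(-\frac{m_t\tr(\mathsf{C}_{\mathbf{H}}^{-1})}{2\mathrm{SNR}_{\mathrm{dl}}(n_t-m_t)}\).
\item Second term: we bound \(\tr(\mathbb{X}^{-2}) \le \lambda_{\min}^{-2}(\mathsf{C}_{\mathbf{H}})\tr\big((\mathbb{P}\mathbb{P}^{\mathsf{H}})^{-2}\big)\), for instance through \(\tr(\mathsf{B}\mathsf{M}\mathsf{B}\mathsf{M}) \le \|\mathsf{B}\|^2\tr(\mathsf{M}^2)\) with \(\mathsf{B}=\mathsf{C}_{\mathbf{H}}^{-1}\). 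We then use the complex second inverse moment \(\mathds{E}[\mathbb{W}_b^{-2}] = \frac{n_t}{p^2(n_t-m_t)((n_t-m_t)^2-1)}\mathsf{I}\), which is valid for \(n_t>m_t+1\).
\end{itemize}
Summing over the \(n_c\) blocks gives \(\frac{m_t^3 n_c n_t}{4\mathrm{SNR}_{\mathrm{dl}}^2(n_t-m_t)(n_t-m_t-1)(n_t-m_t+1)\lambda^2_{\min}}\). Factored as in the statement, this is \(\Delta R_{\mathbf{S}}^{U}\). We expect this second-moment step to need the most care: the exact complex inverse Wishart second moment, the requirement \(n_t>m_t+1\), and the trace inequality that isolates \(\lambda_{\min}(\mathsf{C}_{\mathbf{H}})\).

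\emph{Asymptotics.} We expand both bounds in \(1/n_t\).
\begin{itemize}
\item Upper bound: \(\frac{1}{n_t-m_t} = \frac1{n_t}+O(n_t^{-2})\), and the second term is \(O(n_t^{-2})\). This gives \(C_{\mathbf{S}}^U\).
\item Lower bound: we use \(\psi(n)=\log n - \frac1{2n}+O(n^{-2})\), so that \(\sum_{j=0}^{m_t-1}\psi(n_t-j) = m_t\log n_t - \frac{m_t^2}{2n_t}+O(n_t^{-2})\). We also use \(\log\det(\mathsf{I}+a\mathsf{C}_{\mathbf{H}}) = \log\det(a\mathsf{C}_{\mathbf{H}}) + a^{-1}\tr(\mathsf{C}_{\mathbf{H}}^{-1}) + O(a^{-2})\) with \(a=\mathrm{SNR}_{\mathrm{dl}} n_t/m_t\). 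The logarithmic terms cancel and leave \(-\frac{m_t\tr(\mathsf{C}_{\mathbf{H}}^{-1})}{2\mathrm{SNR}_{\mathrm{dl}}n_t}-\frac{m_t^2 n_c}{4n_t}\), i.e.\ \(C_{\mathbf{S}}^L\).
\end{itemize}
The sandwich shows that \(n_t\,\mathds{E}[\Delta R_{\mathbb{P},\mathbf{S}}]\) lies asymptotically in \([C_{\mathbf{S}}^L, C_{\mathbf{S}}^U]\).

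To justify that a genuine expansion \(C_{\mathbf{S}}/n_t + O(n_t^{-2})\) exists, rather than only \(\liminf/\limsup\) bounds, we would expand \(-\frac12\log\det(\mathsf{I}+\mathbb{X}^{-1})\) as a series in \(\mathbb{X}^{-1}\). The truncation error is controlled by higher inverse Wishart moments, which are \(O(n_t^{-3})\) once \(n_t\) is large. This makes the \(1/n_t\) coefficient \(-\frac12 n_t\tr\big(\mathsf{C}_{\mathbf{H}}^{-1}\mathds{E}[(\mathbb{P}\mathbb{P}^{\mathsf{H}})^{-1}]\big)\) plus a \(\frac14\tr(\mathbb{X}^{-2})\)-type contribution, both rational in \(n_t\). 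This is the secondary delicate point of the proof.
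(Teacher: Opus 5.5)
Your proposal is correct and follows essentially the same route as the paper: the exact complex-Wishart $\mathds{E}[\log\det]$ plus Jensen for the lower bound, and $\log(1+x)\ge x-x^2/2$ with the first and second inverse-Wishart moments for the upper bound (your $\|\mathsf{B}\|^2\tr(\mathsf{M}^2)$ step replaces the paper's von Neumann trace inequality); the coefficients then come from the digamma and $\log\det$ expansions. Your two extra remarks are correct and go beyond the paper: the second inverse moment needs $n_t>m_t+1$, and existence of $C_{\mathbf{S}}$ does need an argument, which is simpler than your series route because $\log(1+x)\le x$ gives $\mathds{E}[\Delta R_{\mathbb{P},\mathbf{S}}]\ge-\tfrac12\mathds{E}[\tr(\mathbb{X}^{-1})]$ and hence $C_{\mathbf{S}}=C_{\mathbf{S}}^{U}$, since $\mathds{E}[\tr(\mathbb{X}^{-2})]=O(n_t^{-2})$ contributes nothing at order $1/n_t$.
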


\begin{IEEEproof}
   See Appendicx~\ref{proof:thm:RDF_decrease}.
\end{IEEEproof}

In this theorem, it is shown that as the number of downlink training symbols \(n_{t}\) increases, the RDF decrease from compressing the MMSE estimate decays towards zero at the rate of \(\frac{1}{n_{t}}\).
Moreover, by combining this result with Proposition~\ref{prop:RDF_decrease}, it follows that, even when the downlink SNR remains fixed, the noisy downlink channel progressively resembles its noiseless counterpart as the training length increases.
\subsection{RDF Increase from a Stricter Distortion Constraint}

In this part, the key observation is that the UE performs an MMSE estimation, which effectively reduces the allowable uplink distortion to \(d - D_{\mathrm{mmse}|\mathbb{P}}\).
This reduction leads to an increase in the corresponding RDF. The related results are summarized in the following theorem.

\begin{theorem} \label{thm:RDF_increase}
When \(n_{c} = n_{p}\) and \(n_{t} \geq m_{t}\), if the overall distortion constraint \(d\) is imposed as specified in Eq.~\eqref{eq:distortion_range}, then with the pilot matrix \(\mathbb{P}\) drawn from the Gaussian ensemble in Eq.~\eqref{eq:pilot_matrix}, the expected MMSE at the UE side is bounded as  
\begin{align}
D_{\mathrm{mmse}}^{L} \leq \mathds{E}_{\mathbb{P}}[D_{\mathrm{mmse}|\mathbb{P}}] \leq D_{\mathrm{mmse}}^{U},
\end{align}
where \(D_{\mathrm{mmse}|\mathbb{P}}\) is in Eq.~\eqref{eq:Dmmse_trace_form}, and
\begin{align}
   &D_{\mathrm{mmse}}^{L} = \tr\Big(\mathsf{C}_{\mathbf{H}} \Big(\mathsf{I} + \mathrm{SNR}_{\mathrm{dl}} \frac{n_{t}}{m_{t}} \mathsf{C}_{\mathbf{H}}\Big)^{-1}\Big), \label{eq:Dmmse_LB} \\
   &D_{\mathrm{mmse}}^{U} = \frac{m_{t}^{2} n_{c}}{\mathrm{SNR}_{\mathrm{dl}}(n_{t}-m_{t})}. \label{eq:Dmmse_UB}
\end{align}
The expected MMSE exhibits the scaling
\begin{align} \label{eq:Dmmse_asymptotics}
\mathds{E}_{\mathbb{P}}[D_{\mathrm{mmse}|\mathbb{P}}] = \frac{m_{t}^{2} n_{c}}{\mathrm{SNR}_{\mathrm{dl}}} \cdot \frac{1}{n_{t}} + O\!\left(\frac{1}{n_{t}^{2}}\right).
\end{align}

Under the reduced distortion constraint \(d - D_{\mathrm{mmse}|\mathbb{P}}\), the corresponding expected increase in the RDF is bounded as  
\begin{align}
   \Delta R_{d}^{L} \leq \mathds{E}_{\mathbb{P}}[\Delta R_{\mathbb{P},d}] \leq \Delta R_{d}^{U},
\end{align}
where \(\Delta R_{\mathbb{P},d}\) is in Eq.~\eqref{eq:RDF_v1_distortion}, and
\begin{align}
   &\Delta R_{d}^{L} = -\frac{m_{t} n_{c}}{2}\,\log\Big(1-\frac{D_{\mathrm{mmse}}^{L}}{d}\Big), \label{eq:RDF_distortion_LB} \\
   &\Delta R_{d}^{U} = \frac{m_{t} n_{c}}{2\epsilon d}\,D_{\mathrm{mmse}}^{U}, \label{eq:RDF_distortion_UB}
\end{align}
for some given \(\epsilon \in (0,1)\). 
Furthermore, in the regime of large \(n_{t}\), we have the asymptotic scaling
\begin{align}
\mathds{E}_{\mathbb{P}}[\Delta R_{\mathbb{P},d}] = \frac{C_{d}}{n_{t}} + O\!\left(\frac{1}{n_{t}^{2}}\right),
\end{align}
where \(C_{d}^{L} \leq C_{d} \leq C_{d}^{U}\), and
\begin{align}
   C_{d}^{L} &= \frac{m_{t}^{3}n_{c}^{2}}{2d\,\mathrm{SNR}_{\mathrm{dl}}}, \label{eq:RDF_distortion_LB_coefficient} \\
   C_{d}^{U} &= \frac{C_{d}^{L}}{\epsilon}. \label{eq:RDF_distortion_UB_coefficient}
\end{align}
\end{theorem}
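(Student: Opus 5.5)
Since $n_c=n_p$, the pilot matrix in Eq.~\eqref{eq:pilot_matrix} is (after a column permutation) block diagonal, so $\mathbb{P}\mathbb{P}^{\mathsf{H}}=\mathrm{blkdiag}(\mathbb{P}_1\mathbb{P}_1^{\mathsf{H}},\dots,\mathbb{P}_{n_c}\mathbb{P}_{n_c}^{\mathsf{H}})$. Each block $\mathbb{W}_b=\mathbb{P}_b\mathbb{P}_b^{\mathsf{H}}$ is independent with law $\mathcal{CW}_{m_t}(n_t,p\mathsf{I})$, where $p=\mathrm{SNR}_{\mathrm{dl}}/m_t$. This gives two facts: $\mathds{E}[\mathbb{P}\mathbb{P}^{\mathsf{H}}]=n_t p\,\mathsf{I}_N$, and, for $n_t>m_t$, $\mathds{E}[(\mathbb{P}\mathbb{P}^{\mathsf{H}})^{-1}]=\frac{1}{p(n_t-m_t)}\mathsf{I}_N$ by the complex inverse-Wishart mean. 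The plan is to use the first fact for the lower bounds and the second for the upper bounds.

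\textbf{MMSE bounds.} Write
\[
D_{\mathrm{mmse}|\mathbb{P}}=\tr\big(\mathsf{C}_{\mathbf{H}}(\mathsf{I}+\mathsf{C}_{\mathbf{H}}^{1/2}\mathbb{P}\mathbb{P}^{\mathsf{H}}\mathsf{C}_{\mathbf{H}}^{1/2})^{-1}\big)=\tr\big((\mathsf{C}_{\mathbf{H}}^{-1}+\mathbb{P}\mathbb{P}^{\mathsf{H}})^{-1}\big),
\]
which uses that $\mathsf{C}_{\mathbf{H}}$ is full rank. The map $\mathsf{X}\mapsto\tr((\mathsf{C}_{\mathbf{H}}^{-1}+\mathsf{X})^{-1})$ is convex on positive semidefinite matrices. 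Jensen's inequality with $\mathds{E}[\mathbb{P}\mathbb{P}^{\mathsf{H}}]=n_tp\,\mathsf{I}$ then gives $D_{\mathrm{mmse}}^{L}=\tr\big(\mathsf{C}_{\mathbf{H}}(\mathsf{I}+\mathrm{SNR}_{\mathrm{dl}}\frac{n_t}{m_t}\mathsf{C}_{\mathbf{H}})^{-1}\big)$.

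For the upper bound, Loewner monotonicity gives $(\mathsf{C}_{\mathbf{H}}^{-1}+\mathbb{P}\mathbb{P}^{\mathsf{H}})^{-1}\preceq(\mathbb{P}\mathbb{P}^{\mathsf{H}})^{-1}$. Taking expectations yields $\mathds{E}[D_{\mathrm{mmse}|\mathbb{P}}]\le N/(p(n_t-m_t))=m_t^2n_c/(\mathrm{SNR}_{\mathrm{dl}}(n_t-m_t))=D_{\mathrm{mmse}}^{U}$. The case $n_t=m_t$ has infinite inverse-Wishart mean, so the bound is vacuous there.

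For the scaling, expand $D^L$ in $1/n_t$: the leading term is $N m_t/(\mathrm{SNR}_{\mathrm{dl}}n_t)$. Expanding $D^U$ gives $\frac{m_t^2n_c}{\mathrm{SNR}_{\mathrm{dl}}n_t}(1+O(1/n_t))$. Both leading terms equal $m_t^2n_c/(\mathrm{SNR}_{\mathrm{dl}}n_t)$, so the two bounds sandwich Eq.~\eqref{eq:Dmmse_asymptotics}.

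\textbf{Rate increase.} With $r_S=r_H=N$ almost surely, as in the proof of Proposition~\ref{prop:RDF_decrease}, Eq.~\eqref{eq:RDF_v1_distortion} reduces to
\[
\Delta R_{\mathbb{P},d}=-\frac{N}{2}\log\big(1-D_{\mathrm{mmse}|\mathbb{P}}/d\big).
\]
The function $x\mapsto-\log(1-x/d)$ is convex and increasing. Jensen's inequality gives $\mathds{E}[\Delta R_{\mathbb{P},d}]\ge-\frac N2\log(1-\mathds{E}[D_{\mathrm{mmse}|\mathbb{P}}]/d)$, and monotonicity together with $\mathds{E}[D_{\mathrm{mmse}|\mathbb{P}}]\ge D^L$ then gives $\Delta R_d^L$.

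For the upper bound I would use $-\log(1-x)\le x/(1-x)$. The admissible range in Eq.~\eqref{eq:distortion_range} requires $D_{\mathrm{mmse}|\mathbb{P}}\le d$. I would strengthen this to $1-D_{\mathrm{mmse}|\mathbb{P}}/d\ge\epsilon$, which I read as the role of the given $\epsilon$. Under it, $\Delta R_{\mathbb{P},d}\le \frac{N}{2\epsilon d}D_{\mathrm{mmse}|\mathbb{P}}$, and taking expectations with $D^U$ gives $\Delta R_d^U$.

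The asymptotic coefficients then follow. Linearizing $\Delta R_d^L$ gives $\frac{N}{2d}D^L\approx\frac{N\,m_t^2n_c}{2d\,\mathrm{SNR}_{\mathrm{dl}}n_t}=C_d^L/n_t$. The upper bound gives $C_d^L/(\epsilon n_t)$, and both have $O(1/n_t^2)$ remainders.

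\textbf{Main obstacle.} The delicate step is the upper bound on $\mathds{E}[\Delta R_{\mathbb{P},d}]$. The logarithm blows up as $D_{\mathrm{mmse}|\mathbb{P}}\to d$, and $D_{\mathrm{mmse}|\mathbb{P}}$ is random with heavy inverse-Wishart tails. So the uniform margin $\epsilon$ must be justified as holding almost surely, or the bad event must be handled separately. I would try to state it as an explicit condition $d\ge D_{\mathrm{mmse}|\mathbb{P}}/(1-\epsilon)$ implied by the admissible range in Eq.~\eqref{eq:distortion_range}. Failing that, I would split on the event $\{D_{\mathrm{mmse}|\mathbb{P}}\le(1-\epsilon)d\}$ and bound its complement by Markov's inequality using $D^U$, which also decays as $1/n_t$.
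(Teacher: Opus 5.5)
Your proposal is correct and follows the paper's proof essentially step for step. It uses Jensen on the convex map $\mathsf{X}\mapsto\tr(\mathsf{C}_{\mathbf{H}}(\mathsf{I}+\mathsf{X})^{-1})$ with $\mathds{E}[\mathbb{P}\mathbb{P}^{\mathsf{H}}]=n_tp\,\mathsf{I}$ for $D^{L}_{\mathrm{mmse}}$ (your $\tr((\mathsf{C}_{\mathbf{H}}^{-1}+\mathbb{P}\mathbb{P}^{\mathsf{H}})^{-1})$ is the same function), drops the identity and applies the inverse-Wishart mean for $D^{U}_{\mathrm{mmse}}$, combines Jensen with monotonicity for $\Delta R_d^L$, and uses $-\log(1-a)\le a/(1-a)$ under the margin $d\ge \tr(\mathbb{X})/(1-\epsilon)$ for $\Delta R_d^U$. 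The paper simply imposes that margin as an extra condition, exactly as in your primary plan. Your Markov fallback would not work on its own, since it bounds only the probability of the bad event and not the unbounded logarithm on it.
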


\begin{IEEEproof}
   See Appendicx~\ref{proof:thm:RDF_increase}.
\end{IEEEproof}

Both the additional RDF caused by the stricter distortion constraint and the MMSE distortion vanish at a rate of \(\frac{1}{n_{t}}\).
Moreover, the asymptotic analysis of the MMSE distortion in Eq.~\eqref{eq:Dmmse_asymptotics} shows that \(D_{\mathrm{mmse}}\) decays inversely with the downlink SNR.
This recovers the result in~\cite{khalilsarai2023fdd} while extending it to hold for arbitrary SNR.

\subsection{Impact of Finite Training on the Overall Performance}

We are now ready to characterize, in a statistical sense, the relationship between the overall RDF of the CSI feedback system and the number of downlink training symbols.

\begin{figure}[!t]
   \centering
   \includegraphics[width=0.85\linewidth]{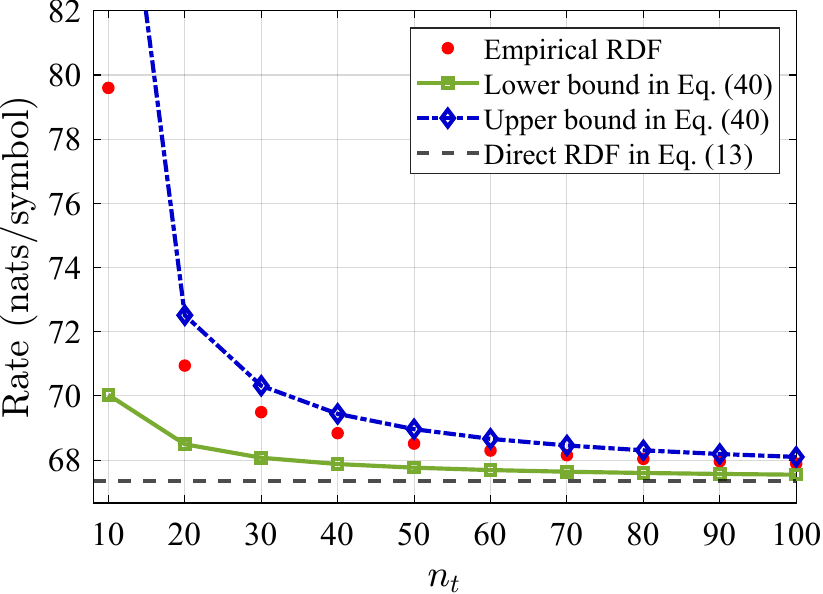}
   \caption{Relationship between downlink training length and the overall RDF of the CSI feedback system.}
   \label{fig:simulation}
   \vspace{-0.5em}
\end{figure}

\begin{theorem}
When \(n_{c} = n_{p}\) and the BS trains the downlink channel using the pilot matrix \(\mathbb{P}\) drawn from the Gaussian ensemble in Eq.~\eqref{eq:pilot_matrix} for \(n_{t}\) symbols with \(n_{t} \geq m_{t}\), and the UE performs an MMSE estimation of \(\mathbf{H}\) before encoding the feedback, the overall expected RDF under the distortion constraint \(\mathds{E}[\|\mathbf{H}-\mathbf{Z}\|^{2}] \leq d\) is bounded as
\begin{align}
\Delta R_{\mathbf{S}}^{L} + \Delta R_{d}^{L} \leq \mathds{E}_{\mathbb{P}}[R_{\mathbf{H},\mathbf{S}|\mathbb{P}}(d)] - R_{\mathbf{H}}(d) \leq \Delta R_{\mathbf{S}}^{U} + \Delta R_{d}^{U},
\end{align}
where the terms \(R_{\mathbf{H}}(d)\), \(\Delta R_{\mathbf{S}}^{L}, \Delta R_{\mathbf{S}}^{U}, \Delta R_{d}^{L}, \Delta R_{d}^{U}\) are given in Eqs.~\eqref{eq:RDF_v1_direct}, \eqref{eq:RDF_source_LB}, \eqref{eq:RDF_source_UB}, \eqref{eq:RDF_distortion_LB}, and \eqref{eq:RDF_distortion_UB}, respectively.

Furthermore, in the regime of large \(n_{t}\), the asymptotic scaling is
\begin{align}
\mathds{E}_{\mathbb{P}}[R_{\mathbf{H},\mathbf{S}|\mathbb{P}}(d)] = R_{\mathbf{H}}(d) + \frac{C_{\mathbf{S},d}}{n_{t}} + O\!\left(\frac{1}{n_{t}^{2}}\right),
\end{align}
where
\begin{align}
C_{\mathbf{S}}^{L} + C_{d}^{L} \leq C_{\mathbf{S},d} \leq C_{\mathbf{S}}^{U} + C_{d}^{U},
\end{align}
with each term provided in Eqs.~\eqref{eq:RDF_source_LB_coefficient}, \eqref{eq:RDF_source_UB_coefficient}, \eqref{eq:RDF_distortion_LB_coefficient}, and \eqref{eq:RDF_distortion_UB_coefficient}.
\end{theorem}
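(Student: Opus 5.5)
The argument combines the decomposition of Theorem~\ref{thm:RDF_main} with the two bounds of Theorems~\ref{thm:RDF_decrease} and~\ref{thm:RDF_increase}. Taking the expectation over $\mathbb{P}$ in Eq.~\eqref{eq:RDF_v1} gives
\begin{align}
\mathds{E}_{\mathbb{P}}[R_{\mathbf{H},\mathbf{S}|\mathbb{P}}(d)] - R_{\mathbf{H}}(d) = \mathds{E}_{\mathbb{P}}[\Delta R_{\mathbb{P},\mathbf{S}}] + \mathds{E}_{\mathbb{P}}[\Delta R_{\mathbb{P},d}],
\end{align}
because $R_{\mathbf{H}}(d)$ in Eq.~\eqref{eq:RDF_v1_direct} does not depend on $\mathbb{P}$. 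Linearity of expectation is justified once each expectation is finite, and finiteness follows from the bounds already established in those two theorems.

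Before combining, I check that we are in the regime where both theorems apply. When $n_c=n_p$ and $n_t\ge m_t$, the proof of Proposition~\ref{prop:RDF_decrease} shows that $\mathbb{P}$ has full row rank $N$ almost surely. Hence $r_S=r_H=N=m_tn_c$ almost surely, since $\mathsf{C}_{\mathbf{H}}$ is full rank. Consequently the term $\frac{r_H}{2}\log\frac{r_S}{r_H}$ in Eq.~\eqref{eq:RDF_v1_source} vanishes, and so does the first term of Eq.~\eqref{eq:RDF_v1_distortion}. This leaves exactly the quantities bounded in Theorem~\ref{thm:RDF_decrease} and Theorem~\ref{thm:RDF_increase}. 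Both theorems are stated under the distortion range Eq.~\eqref{eq:distortion_range}, together with whatever auxiliary condition fixes $\epsilon$ (namely $D_{\mathrm{mmse}|\mathbb{P}}\le(1-\epsilon)d$, used to bound $-\log(1-x)\le x/\epsilon$), so I carry these standing assumptions over. Adding $\Delta R_{\mathbf{S}}^{L}\le\mathds{E}[\Delta R_{\mathbb{P},\mathbf{S}}]\le\Delta R_{\mathbf{S}}^{U}$ and $\Delta R_{d}^{L}\le\mathds{E}[\Delta R_{\mathbb{P},d}]\le\Delta R_{d}^{U}$ termwise then gives the non-asymptotic sandwich.

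For the asymptotic statement, I add the two expansions $\mathds{E}[\Delta R_{\mathbb{P},\mathbf{S}}]=C_{\mathbf{S}}/n_t+O(n_t^{-2})$ and $\mathds{E}[\Delta R_{\mathbb{P},d}]=C_d/n_t+O(n_t^{-2})$. This yields $C_{\mathbf{S},d}=C_{\mathbf{S}}+C_d$ with a remainder $O(n_t^{-2})$, since a sum of two $O(n_t^{-2})$ terms is again $O(n_t^{-2})$. The bounds $C_{\mathbf{S}}^{L}+C_d^{L}\le C_{\mathbf{S},d}\le C_{\mathbf{S}}^{U}+C_d^{U}$ then follow by adding the coefficient bounds of the two theorems.

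The only delicate point, and so the main obstacle, is the almost-sure rank identification together with keeping the standing assumptions consistent. The distortion range Eq.~\eqref{eq:distortion_range} depends on $\mathbb{P}$, so I interpret the expected RDF as an expectation over the pilot realizations for which $d$ lies in the admissible range, exactly as assumed in Theorem~\ref{thm:RDF_increase}. This is the same convention the earlier results use, so no new estimate is required beyond this bookkeeping.
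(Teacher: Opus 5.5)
Your proposal is correct and is essentially the paper's proof. Like the paper, you take $\mathds{E}_{\mathbb{P}}$ of the decomposition in Theorem~\ref{thm:RDF_main}, use $r_S=r_H=m_tn_c$ almost surely so the rank-correction terms drop out, and add termwise the non-asymptotic bounds and $1/n_t$ expansions of Theorems~\ref{thm:RDF_decrease} and~\ref{thm:RDF_increase}. One caution: those bounds are unconditional Wishart-moment and Jensen computations, so keep the distortion range (and the $\epsilon$-condition) as a standing hypothesis on $d$, as the paper does. Do not reinterpret $\mathds{E}_{\mathbb{P}}$ as an expectation restricted to admissible pilot realizations, because conditioning on that event changes the distribution of $\mathbb{P}$ and the bounds would no longer transfer directly.
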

\begin{IEEEproof}
The proof follows directly by combining Theorem~\ref{thm:RDF_decrease} and Theorem~\ref{thm:RDF_increase}.
\end{IEEEproof}
As the number of downlink training pilots increases, the overall RDF based on the UE's MMSE estimate asymptotically converges the direct RDF, corresponding to the case where the UE has full access to the downlink CSI. The gap between the two decays inversely with the number of training symbols.

For the simulations, we consider a system with \(m_{t} = 4\) transmit antennas, \(n_{c} = 8\) subcarriers, \(n_{p} = 8\) pilot blocks, \(\mathrm{SNR}_{\mathrm{dl}} = 10\,\text{dB}\), and a distortion constraint \(d = 3.5\). To emulate a realistic channel environment, the channel covariance matrix \(\mathsf{C}_{\mathbf{H}}\) is chosen such that roughly one-third of its eigenvalues are small, one-third are large, and the rest are moderate.
The simulation results are presented in Fig.~\ref{fig:simulation}, showing that the derived upper and lower bounds are tight and both decay as \(\frac{1}{n_{t}}\) toward the direct RDF. This confirms that increasing the number of downlink training symbols significantly improves the overall CSI feedback performance. Notably, even when the training length is several times the number of antennas, the overall RDF still exhibits a noticeable deviation from the direct RDF, indicating that appropriate design of training and feedback strategies is essential in practical systems.

\section{Conclusion}

In this work, we have considered a CSI feedback system where the UE performs MMSE estimation before encoding and feeding back to the BS. We provided the RDF of the overall system when the BS employs finite downlink training with Gaussian ensemble pilots. Specifically, the RDF was decomposed into three contributions: the direct RDF corresponding to the UE having access to the downlink CSI, the decrease in RDF due to MMSE lowering the source uncertainty, and the increase in RDF caused by the stricter distortion constraint induced by MMSE in the feedback process. 
A key observation is that as the number of downlink training symbols increases, the overall RDF decays proportionally to the inverse of the training length, eventually converging the RDF of direct CSI encoding. This result is practically relevant, as it indicates that sufficient downlink training allows the UE to effectively behave as if it has full access to the downlink CSI for subsequent feedback. Future work will investigate the system performance when training is performed on only a few pilot subcarriers, under scenarios with extremely limited training and a very large number of BS antennas.

\appendices

\section{
Proof of Theorem~\ref{thm:RDF_main} \label{proof:thm:RDF_main}
}
At the UE, the MMSE estimation error \(\mathbf{E} = \mathbf{H} - \mathbf{S}\) is independent of \(\mathbf{S}\), with \(\mathbf{E} \sim \mathcal{CN}(\mathbf{0}, \mathsf{C}_{\mathbf{E}})\) and \(\mathsf{C}_{\mathbf{E}} = \mathsf{C}_{\mathbf{H}} - \mathsf{C}_{\mathbf{S} \mid \mathbb{P}}\).  
The optimal test channel achieving the RDF is modeled as an AWGN channel \(\mathbf{S} = \mathbf{Z} + \mathbf{N}\), where \(\mathbf{N} \sim \mathcal{CN}(\mathbf{0}, \mathsf{C}_{\mathbf{N}})\) is independent of \(\mathbf{Z}\).
Thus, the distortion becomes \(\mathds{E}[\|\mathbf{H} - \mathbf{Z}\|^{2}] = \mathds{E}[\|\mathbf{E} + \mathbf{N}\|^2]\), and the mutual information \(I(\mathbf{S}; \mathbf{Z})\) can be reformulated accordingly.  
For the Gaussian case, the RDF in Eq.~\eqref{eq:RDF_original_form} is obtained by solving
\begin{align}
   \min_{\mathsf{C}_{\mathbf{N}}} \quad &\frac{1}{2} \log \det^{+} \left( \mathsf{C}_{\mathbf{S} \mid \mathbb{P}} \mathsf{C}_{\mathbf{N}}^{-1} \right), \label{eq:mutual_information}
\\
   \text{s.t.} \quad & \tr(\mathsf{C}_{\mathbf{H}}) - \tr(\mathsf{C}_{\mathbf{S} \mid \mathbb{P}}) + \tr(\mathsf{C}_{\mathbf{N}}) \leq d.
\end{align} 
For rigor, the pseudo-determinant is used to account for the possible rank deficiency of \(\mathsf{C}_{\mathbf{S} \mid \mathbb{P}}\) in Eq.~\eqref{eq:mutual_information}.
This is a trace-constrained maximum-determinant problem~\cite{boyd2004convex}.  
According to the classical reverse water-filling principle~\cite{thomas2006elements}, by allocating the same noise power across all nonzero eigenvalues of \(\mathsf{C}_{\mathbf{S} \mid \mathbb{P}}\), in view of Eq.~\eqref{eq:distortion_range}, the optimal noise covariance is
\begin{align} \label{eq:CN_{c}ov}
   \mathsf{C}_{\mathbf{N}}^{\star} = \frac{d - (\tr(\mathsf{C}_{\mathbf{H}}) - \tr(\mathsf{C}_{\mathbf{S} \mid \mathbb{P}}))}{r_{S}} \, \mathsf{I},
\end{align} 
where \(r_{S}\) is the effective dimension of \(\mathbf{S}\).
Substituting \(\mathsf{C}_{\mathbf{N}}^{\star}\) from Eq.~\eqref{eq:CN_{c}ov} into Eq.~\eqref{eq:mutual_information}, the RDF can be expressed as
\begin{align} \label{eq:RDF_origin_v1}
   R_{\mathbf{H},\mathbf{S}|\mathbb{P}}(d) = \tfrac{1}{2} \log \det^{+} \!\! \left( \mathsf{C}_{\mathbf{S} \mid \mathbb{P}} \right) - \tfrac{r_{S}}{2} \log \! \big( \tfrac{d-(\tr(\mathsf{C}_{\mathbf{H}}) - \tr(\mathsf{C}_{\mathbf{S} \mid \mathbb{P}}))}{r_{S}} \big).
\end{align}
The terms in Eq.~\eqref{eq:RDF_v1_direct} that are independent of \(\mathbb{P}\) are separated from the RDF in Eq.~\eqref{eq:RDF_origin_v1}. 
By defining \(\mathbb{A} = \mathsf{C}_{\mathbf{H}}^{1/2} \mathbb{P}\), we write
\(\mathsf{C}_{\mathbf{S} \mid \mathbb{P}} = \mathsf{C}_{\mathbf{H}}^{1/2} \mathbb{A} (\mathbb{A}^{\mathsf{H}}\mathbb{A} + \mathsf{I})^{-1} \mathbb{A}^{\mathsf{H}} \mathsf{C}_{\mathbf{H}}^{1/2}\)
in Eq.~\eqref{eq:MMSE_estimator}.
Recalling the Woodbury identity,
\begin{align}
   \mathbb{A}(\mathbb{A}^{\mathsf{H}}\mathbb{A} + \mathsf{I})^{-1}\mathbb{A}^{\mathsf{H}} &= (\mathsf{I} + \mathbb{A}\mathbb{A}^{\mathsf{H}})^{-1} \mathbb{A}\mathbb{A}^{\mathsf{H}} \label{eq:Woodbury_identity_part1} \\
   &= \mathsf{I} - (\mathsf{I} + \mathbb{A}\mathbb{A}^{\mathsf{H}})^{-1}, \label{eq:Woodbury_identity_part2}
\end{align}
using Eq.~\eqref{eq:Woodbury_identity_part1}, we obtain \(\Delta R_{\mathbb{P},\mathbf{S}}\) in Eq.~\eqref{eq:RDF_v1_source}. On the other hand, by rewriting \(\mathsf{C}_{\mathbf{S} \mid \mathbb{P}}\) using Eq.~\eqref{eq:Woodbury_identity_part2}, we have \(\Delta R_{\mathbb{P},d}\) in Eq.~\eqref{eq:RDF_v1_distortion} and \(D_{\mathrm{mmse} \mid \mathbb{P}}\) in Eq.~\eqref{eq:Dmmse_trace_form}.

\section{
Proof of Theorem~\ref{thm:RDF_decrease} \label{proof:thm:RDF_decrease}
}
We have \(\mathds{E}_{\mathbb{P}}[\Delta R_{\mathbb{P},\mathbf{S}}] = \frac{1}{2} \big( \mathds{E}[\log \det(\mathbb{A}\mathbb{A}^{\mathsf{H}})] - \mathds{E}[\log \det(\mathsf{I}+\mathbb{A}\mathbb{A}^{\mathsf{H}})] \big)\).
To establish a bound on this quantity, we first derive the exact 
expression of \(\mathds{E}[\log \det(\mathbb{A}\mathbb{A}^{\mathsf{H}})]\), and then proceed to bound \(\mathds{E}[\log \det(\mathsf{I}+\mathbb{A}\mathbb{A}^{\mathsf{H}})]\).
With \(\mathbb{A} = \mathsf{C}_{\mathbf{H}}^{1/2}\mathbb{P}\), we write \(\det(\mathbb{A}\mathbb{A}^{\mathsf{H}}) = \det(\mathsf{C}_{\mathbf{H}})\det(\mathbb{P}\mathbb{P}^{\mathsf{H}})\). Therefore, it suffices to focus on \(\mathds{E}[\log \det(\mathbb{P}\mathbb{P}^{\mathsf{H}})]\).
The pilot matrix \(\mathbb{P}\) is block-diagonal with blocks \(\mathbb{P}_{k} \in \mathcal{C}^{m_{t} \times n_{t}}\), where each block \(\mathbb{P}_{k}\mathbb{P}_{k}^{\mathsf{H}}\) follows a complex Wishart distribution \(\mathcal{CW}_{m_{t}}(n_{t}, p\mathsf{I})\) when \(n_{t} \geq m_{t}\).
For \(\mathbb{X} \sim \mathcal{CW}_{m}(n, \mathsf{\Sigma})\) with \(n \geq m\), we have~\cite[Eq. (2.12)]{tulino2004random} 
\begin{align}
\mathds{E}[\log \det(\mathbb{X})] = \log \det(\mathsf{\Sigma}) + \sum\nolimits_{j=0}^{m-1} \psi(n-j),
\end{align}
where \(\psi(\cdot)\) is the digamma function. Using this result, we have \(\mathds{E}[\log \det(\mathbb{P}_{k}\mathbb{P}_{k}^{\mathsf{H}})] = m_{t} \log p + \sum_{j=0}^{m_{t}-1} \psi(n_{t} - j)\), where \(p = \frac{\mathrm{SNR}_{\mathrm{dl}}}{m_{t}}\) denotes the power constraint of the pilot matrix \(\mathbb{P}\).
Thus summing over all blocks yields  
\begin{align}
&\mathds{E}[\log \det(\mathbb{A}\mathbb{A}^{\mathsf{H}})] = \log \det(\mathsf{C}_{\mathbf{H}}) \notag \\
&\quad \quad \quad \quad \quad \quad + n_{c} \Big(m_{t} \log p + \sum\nolimits_{j=0}^{m_{t}-1} \psi(n_{t} - j)\Big). \label{eq:E_log_det_AAH}
\end{align}
Applying Jensen's inequality, we have
\begin{align}
\mathds{E}\big[\log \det(\mathsf{I}+\mathbb{A}\mathbb{A}^{\mathsf{H}})\big]
&\leq \log \det\big(\mathsf{I}+\mathds{E}[\mathbb{A}\mathbb{A}^{\mathsf{H}}]\big) \label{eq:E_log_det_OnePlusAAH} \\
&= \log \det\big(\mathsf{I}+\mathsf{C}_{\mathbf{H}}^{1/2}\,\mathds{E}[\mathbb{P}\mathbb{P}^{\mathsf{H}}]\,\mathsf{C}_{\mathbf{H}}^{1/2}\big) \\
&= \log \det\big(\mathsf{I}+n_{t} p\,\mathsf{C}_{\mathbf{H}}\big), \label{eq:E_log_det_OnePlusAAH_UB}
\end{align}
where the last equality follows from the fact that each block \(\mathbb{P}_{k}\) of the pilot matrix \(\mathbb{P}\) satisfies \(\mathds{E}[\mathbb{P}_{k} \mathbb{P}_{k}^{\mathsf{H}}] = n_{t} p\,\mathsf{I}_{m_{t}}\), so that \(\mathds{E}[\mathbb{P}\mathbb{P}^{\mathsf{H}}]=n_{t} p\,\mathsf{I}_{N}\) with \(N = m_{t}n_{c}\).
Combining Eq.~\eqref{eq:E_log_det_AAH} and Eq.~\eqref{eq:E_log_det_OnePlusAAH_UB} yields the lower bound in Eq.~\eqref{eq:RDF_source_LB}. 

For the upper bound, we consider 
\(\mathds{E}_{\mathbb{P}}[\Delta R_{\mathbb{P},\mathbf{S}}] = \tfrac{1}{2}\mathds{E}\big[\sum_{i=1}^{N} \log\big(\frac{\sigma_{i}}{1+\sigma_{i}}\big)\big]\), 
where \(\{\sigma_{i}\}\) are the eigenvalues of \(\mathbb{A}\mathbb{A}^{\mathsf{H}}\). Using the inequality \(\log\big(\frac{\sigma}{1+\sigma}\big) \leq -\frac{1}{\sigma} + \frac{1}{2\sigma^{2}}\) for \(\sigma>0\), we obtain 
\(\mathds{E}_{\mathbb{P}}[\Delta R_{\mathbb{P},\mathbf{S}}] \leq \tfrac{1}{2}\mathds{E}\big[\sum_{i=1}^{N} \big(-\frac{1}{\sigma_{i}} + \frac{1}{2\sigma_{i}^{2}}\big)\big]\), 
which in matrix form reads 
\begin{align} \label{eq:RDF_source_trace_UB}
\mathds{E}_{\mathbb{P}}[\Delta R_{\mathbb{P},\mathbf{S}}] \leq \tfrac{1}{2}\mathds{E}\big[-\tr\big((\mathbb{A}\mathbb{A}^{\mathsf{H}})^{-1}\big) + \tfrac{1}{2}\tr\big((\mathbb{A}\mathbb{A}^{\mathsf{H}})^{-2}\big)\big].
\end{align}
Substituting \(\mathbb{A} = \mathsf{C}_{\mathbf{H}}^{1/2}\mathbb{P}\), we have
\(\mathds{E}_{\mathbb{P}}[\Delta R_{\mathbb{P},\mathbf{S}}] \leq \tfrac{1}{2}\mathds{E}\big[-\tr(\mathsf{C}_{\mathbf{H}}^{-1}(\mathbb{P}\mathbb{P}^{\mathsf{H}})^{-1})\big] + 
\frac{1}{2}\mathds{E}\big[\tr((\mathbb{P}\mathbb{P}^{\mathsf{H}})^{-1}\mathsf{C}_{\mathbf{H}}^{-1}(\mathbb{P}\mathbb{P}^{\mathsf{H}})^{-1}\mathsf{C}_{\mathbf{H}}^{-1})\big]\). 
To handle the second term, we invoke Von Neumann's trace inequality~\cite{mirsky1975trace}, which yields
\begin{align}
\mathds{E}\big[\tr((\mathbb{P}\mathbb{P}^{\mathsf{H}})^{-1}\mathsf{C}_{\mathbf{H}}^{-1}(\mathbb{P}\mathbb{P}^{\mathsf{H}})^{-1}\mathsf{C}_{\mathbf{H}}^{-1})\big]
\leq \tfrac{\tr ( \mathds{E}[(\mathbb{P}\mathbb{P}^{\mathsf{H}})^{-2}])}{\lambda_{\min}^{2}(\mathsf{C}_{\mathbf{H}})}.
\end{align}
It follows from~\cite[Lemma 2.10]{tulino2004random} and~\cite[21.14]{seber2008matrix} that
\begin{align}
   \mathds{E}\big[(\mathbb{P}\mathbb{P}^{\mathsf{H}})^{-1}\big] &= \frac{\mathsf{I}_{N}}{p (n_{t} - m_{t})}, \label{eq:Wishart_inverse1} \\
   \mathds{E}\big[(\mathbb{P}\mathbb{P}^{\mathsf{H}})^{-2}\big] &= \frac{n_{t}\,\mathsf{I}_{N}}{p^{2} (n_{t} - m_{t}-1)(n_{t} - m_{t})(n_{t} - m_{t}+1)}, \label{eq:Wishart_inverse2}
\end{align}
for \(n_{t} \geq m_{t}\). By applying Eq.~\eqref{eq:RDF_source_trace_UB} and substituting \(p = \frac{\mathrm{SNR}_{\mathrm{dl}}}{m_{t}}\), the corresponding upper bound can be established in Eq.~\eqref{eq:RDF_source_UB}.

Next, we analyze the asymptotic behavior of the bounds with respect to \(n_{t}\).  
Starting with the lower bound, and using the asymptotic expansion~\cite[6.3.18]{abramowitz1964handbook}, we have
\begin{align}
\sum\nolimits_{j=0}^{m_{t}-1} \psi(n_{t} - j)
= m_{t} \log n_{t} - \frac{m_{t}^{2}}{2n_{t}} + O\!\left(\frac{1}{n_{t}^{2}}\right).
\end{align}
Moreover, by expanding the logarithmic term and expressing it in terms of the trace of \(\mathsf{C}_{\mathbf{H}}\), we obtain
\begin{align}
& \log \det\big(\mathsf{I}+\mathrm{SNR}_{\mathrm{dl}} \tfrac{n_{t}}{m_{t}}\,\mathsf{C}_{\mathbf{H}}\big) = N\log p + \log \det(\mathsf{C}_{\mathbf{H}}) \notag \\
& \quad \quad \quad \quad \quad \quad \quad + N \log n_{t} + \frac{\tr(\mathsf{C}_{\mathbf{H}}^{-1})}{p n_{t}} + O\!\left(\frac{1}{n_{t}^{2}}\right).
\end{align}
This yields \(C_{\mathbf{S}}^{L}\) in Eq.~\eqref{eq:RDF_source_LB_coefficient}.
On the other hand, \(C_{\mathbf{S}}^{U}\) in Eq.~\eqref{eq:RDF_source_UB_coefficient} follows directly from the first term of Eq.~\eqref{eq:RDF_source_UB}.

\section{
Proof of Theorem~\ref{thm:RDF_increase} \label{proof:thm:RDF_increase}
}
We have the quantity of interest \(D_{\mathrm{mmse}|\mathbb{P}} = \tr\big(\mathsf{C}_{\mathbf{H}}(\mathsf{I} + \mathbb{A}\mathbb{A}^{\mathsf{H}})^{-1}\big)\), and the function \(f(\mathsf{X}) = \tr\big(\mathsf{C}_{\mathbf{H}}(\mathsf{I}+\mathsf{X})^{-1}\big)\) is convex for \(\mathsf{X} \succeq \mathbf{0}\). By applying Jensen's inequality and recalling the argument from Eq.~\eqref{eq:E_log_det_OnePlusAAH} to Eq.~\eqref{eq:E_log_det_OnePlusAAH_UB}, we obtain the lower bound in Eq.~\eqref{eq:Dmmse_LB}.
For the upper bound, we first note that \(\tr\big(\mathsf{C}_{\mathbf{H}}(\mathsf{I} + \mathbb{A}\mathbb{A}^{\mathsf{H}})^{-1}\big) \leq \tr\big(\mathsf{C}_{\mathbf{H}}(\mathbb{A}\mathbb{A}^{\mathsf{H}})^{-1}\big)\). Recalling Eq.~\eqref{eq:Wishart_inverse1}, we then have the upper bound in Eq.~\eqref{eq:Dmmse_UB}.
For the asymptotic analysis, we consider the lower bound 
\(\tr\big(\mathsf{C}_{\mathbf{H}} (\mathsf{I} + \alpha \mathsf{C}_{\mathbf{H}})^{-1}\big) = \sum_{i=1}^{N} \frac{\lambda_{i}}{1+\alpha \lambda_{i}}\), 
where \(\alpha = \mathrm{SNR}_{\mathrm{dl}} \frac{n_{t}}{m_{t}}\) and \(\{\lambda_{i}\}\) are the eigenvalues of \(\mathsf{C}_{\mathbf{H}}\). For large \(n_{t}\), we have 
\(\frac{\lambda_{i}}{1+\alpha \lambda_{i}} = \frac{1}{\alpha} + O(\frac{1}{n_{t}^{2}})\). Combining this with the upper bound, we obtain the asymptotic scaling in Eq.~\eqref{eq:Dmmse_asymptotics}.

Recall that when \(n_{c} = n_{p}\) and \(n_{t} \geq m_{t}\), we have 
\(\mathds{E}_{\mathbb{P}}[\Delta R_{\mathbb{P},d}] = \frac{N}{2} \, \mathds{E}_{\mathbb{P}}\big[-\log\big(1 - \tfrac{D_{\mathrm{mmse}|\mathbb{P}}}{d}\big)\big]\). 
Applying Jensen's inequality and using the lower bound of \(\mathds{E}_{\mathbb{P}}[D_{\mathrm{mmse}|\mathbb{P}}]\), a lower bound on \(\mathds{E}_{\mathbb{P}}[\Delta R_{\mathbb{P},d}]\) follows in Eq.~\eqref{eq:RDF_distortion_LB}.
For the upper bound, we first introduce the notation \(\mathbb{X} = \mathsf{C}_{\mathbf{H}}(\mathsf{I}+\mathbb{A}\mathbb{A}^{\mathsf{H}})^{-1}\). 
Applying \(-\log(1-a) \leq \frac{a}{1-a}\) for \(a \in (0,1)\), the quantity of interest can be upper-bounded as
\begin{align}
-\tfrac{N}{2}\,\mathds{E}\big[\log\big(1-\tfrac{1}{d}\tr(\mathbb{X})\big)\big]
&\leq \tfrac{N}{2}\,\mathds{E}\big[\big(\tfrac{d}{\tr(\mathbb{X})}-1\big)^{-1}\big].
\end{align}
Recall that the distortion constraint satisfies \(\tr(\mathbb{X}) \leq d \leq \tr(\mathbb{X}) + N \lambda_{\min}(\mathsf{C}_{\mathbf{S}|\mathbb{P}})\) in Eq.~\eqref{eq:distortion_range}.  
To simplify the bound, we impose the condition
\begin{align}
d \geq \tfrac{1}{1-\epsilon}\,\tr(\mathbb{X}),
\end{align}
for some \(\epsilon \in (0,1)\).  
We can now proceed to further bound
\begin{align}
\mathds{E}\big[\big(\tfrac{d}{\tr(\mathbb{X})}-1\big)^{-1}\big]
&\leq \tfrac{1}{\epsilon d}\mathds{E}\big[\tr(\mathbb{X})\big].
\end{align}
Applying the upper bound on \(\mathds{E}[\tr(\mathbb{X})]\) from Eq.~\eqref{eq:Dmmse_UB} yields an upper bound on \(\mathds{E}_{\mathbb{P}}[\Delta R_{\mathbb{P},d}]\) as given in Eq.~\eqref{eq:RDF_distortion_UB}.
In the regime of large \(n_{t}\), the asymptotic coefficient can be readily obtained from the previously established asymptotic analysis of \(D_{\mathrm{mmse}}^{L}\) and \(D_{\mathrm{mmse}}^{U}\).
\bibliographystyle{IEEEtran}
\bibliography{references}

\end{document}